\documentclass[11pt]{article}

\usepackage[numbers]{natbib}
\usepackage{amsmath}
\usepackage{amssymb}
\usepackage{amsthm}
\usepackage{xcolor}
\usepackage{hyperref}
\usepackage{algorithm}
\usepackage{algpseudocode}
\usepackage[margin=1in]{geometry}

\hypersetup{colorlinks=true,citecolor=blue,linkcolor=blue}
\allowdisplaybreaks

\newcommand{\paperTitle}{Hadamard Flattening and Gaussian Pooling Sketch for Least Squares with Coordinate-wise Guarantee}
\newcommand{\paperAuthor}{Zhao Song\thanks{Independent Researcher. Email: \texttt{magic.linuxkde@gmail.com}.} \and Lichen Zhang\thanks{Massachusetts Institute of Technology. Email: \texttt{lichenz@mit.edu}.}}

\usepackage{graphicx}
\usepackage{caption}
\newcommand{\ind}{\mathrel{\perp\!\!\!\perp}}
\newcommand{\Tmat}{{\cal T}_{\mathrm{mat}}}
\newcommand{\R}{\mathbb{R}}
\newcommand{\E}{\mathbb{E}}
\newcommand{\op}{{\rm op}}

\theoremstyle{plain}
\newtheorem{theorem}{Theorem}[section]
\theoremstyle{definition}
\newtheorem{lemma}[theorem]{Lemma}
\theoremstyle{plain}
\newtheorem{claim}[theorem]{Claim}
\newtheorem{definition}[theorem]{Definition}
\newtheorem{proposition}[theorem]{Proposition}

\newtheorem{remark}[theorem]{Remark}
\theoremstyle{definition}

\begin{document}

\date{August 7, 2026}
\title{\paperTitle}
\author{\paperAuthor}
\maketitle

\begin{abstract}
Randomized sketch-and-solve algorithms accelerate overconstrained $\ell_2$
regression by replacing the input with a much smaller, randomly projected
problem. Standard subspace embeddings guarantee that the cost of the
regression is nearly preserved, but coordinate-wise accuracy of the solution
is more delicate: instead of preserving the objective value, we want the
solution vector itself to be close to the optimal solution in $\ell_\infty$ norm.
In particular, we want to find a vector $x'\in \mathbb{R}^d$ such that
$\|x'-x^\star\|_\infty\leq \frac{\epsilon}{\sqrt d}\cdot \|Ax^\star-b\|_2\cdot
\|A^\dagger\|_{\op}$, where $A\in \mathbb{R}^{n\times d}$ is the design
matrix, $b\in \mathbb{R}^n$ is the label vector, $x^\star\in \mathbb{R}^d$ is the
optimal solution and $A^\dagger$ is the pseudo-inverse of $A$. Price, Song and
Woodruff initiated the study of this problem and showed that the subsampled
randomized Hadamard transform (SRHT) with
$O(\epsilon^{-2} d^{1+\Theta(\sqrt{\log\log n/\log d})})$ rows achieves this
guarantee. A subsequent work of Song, Ye, Yin and Zhang claimed to improve the
row count to $O(\epsilon^{-2}d\log^3 n)$. Unfortunately, their proof relies on
an independence assumption that does not hold in general, and we exhibit an
explicit instance on which it fails.

To achieve a truly nearly-linear-in-$d$ row count, we introduce a new fast,
dense randomized transform, which combines a randomized Hadamard flattening, a
random permutation, and balanced, disjoint Gaussian pooling. Conditioned on
the Hadamard-and-permutation stage, the sketched problem becomes an exact
Gaussian regression in which the noise is independent of the entire sketched
design; this conditional independence is exactly what the earlier argument was
missing. Our sketch yields the $\ell_\infty$ guarantee with
$m=O(\epsilon^{-2}d\log d)$ rows, uses one Hadamard pass with a padded
internal dimension $N=\widetilde{O}(n+\epsilon^{-2}d^3)$, and is efficient to
apply: the sketched pair $(SA, Sb)$ can be computed in
$O(Nd\log N)=\widetilde{O}(nd+\epsilon^{-2}d^4)$ time.

\end{abstract}

\section{Introduction}

Least-squares regression is a workhorse of scientific computing, numerical
linear algebra, and modern machine learning pipelines. In the overconstrained
setting, a data matrix $A\in\R^{n\times d}$ with $n\gg d$ and a label vector
$b\in\R^n$ are given, and one wishes to (approximately) minimize
$\|Ax-b\|_2$ over $x\in\R^d$. A natural way to speed this up is to compress
the problem before solving it: the sketch-and-solve paradigm draws an
oblivious random matrix $S\in\R^{m\times n}$ with $m\ll n$, forms $SA$ and
$Sb$, and solves the $m$-row problem $\min_{x\in\R^d}\|SAx-Sb\|_2$ instead.
Randomized compression of this kind traces back to the Johnson--Lindenstrauss
lemma~\cite{jl84}: any $N$ points in Euclidean space can be mapped into
$O(\epsilon^{-2}\log N)$ dimensions so that every pairwise distance is
preserved up to a $1\pm\epsilon$ factor, and Larsen and Nelson~\cite{ln17}
proved that no embedding, linear or not, can do better in the worst case.
Regression asks for more than pairwise distances, and the right tool is an
oblivious subspace embedding~\cite{sar06,dmms11}: a random matrix that
simultaneously preserves the norm of every vector in a fixed low-dimensional
subspace, which suffices for relative-error guarantees on the objective. Such
embeddings admit fast implementations of two kinds: dense ones built from
randomized Hadamard transforms, and sparse ones built from hashing, including
CountSketch, which originated in streaming frequency
estimation~\cite{ccfc02} and later enabled regression in input-sparsity
time~\cite{cw17}, and OSNAP~\cite{nn13}, which refined the tradeoff between
sparsity and embedding dimension.

Guarantees of this type, however, primarily control the residual or a global
norm of the solution error. When $S$ is an oblivious subspace embedding with
distortion $\epsilon$, the minimizer $x'$ of the sketched problem satisfies
$\|x'-x^\star\|_2\leq O(\epsilon)\cdot\|Ax^\star-b\|_2\cdot\|A^\dagger\|_{\op}$,
where $x^\star$ denotes the optimal solution. For a direction $a\in\R^d$ fixed
independently of the sketch, Cauchy--Schwarz then gives
\begin{align*}
|\langle a,x'-x^\star\rangle|\leq O(\epsilon)\cdot\|a\|_2\cdot\|Ax^\star-b\|_2\cdot\|A^\dagger\|_{\op}.
\end{align*}
This estimate, however, leaves a factor of $\sqrt{d}$ on the table: for a
generic direction $a$, the deviation should scale like a $1/\sqrt{d}$
fraction of the $\ell_2$ error, and a bound of this refined form, applied
simultaneously to $a=e_1,\ldots,e_d$, yields an $\ell_\infty$ guarantee on
$x'-x^\star$. Price, Song and Woodruff~\cite{psw17} confirmed this prediction for the
subsampled randomized Hadamard transform ({\sf SRHT}), albeit with a row
count that is superlinear in $d$.
\cite{syyz23} later claimed a nearly-linear-in-$d$ row count by combining a
constant-distortion oblivious subspace embedding with an oblivious
coordinate-wise embedding for a fixed pair of vectors. Unfortunately, their
argument applies the fixed-vector guarantee to a direction that contains the
inverse of the random sketched Gram matrix and therefore depends on the same
sketch. The stated guarantee does not cover this adaptive choice, and as we
show in Section~\ref{sec:original_dependency_bug}, the gap is genuine rather
than merely formal.

In this work, we introduce a new sketch, which we call the \emph{balanced
Gaussian-pooled transform}, and prove that it achieves the $\ell_\infty$
guarantee with a truly nearly-linear-in-$d$ row count. Since this sketch is
the central object of the paper, let us first state its ingredients
explicitly. The transform factors as
\begin{align*}
S:=BD_\gamma \Pi F_N D_\sigma J,
\end{align*}
applied from right to left: $J$ pads the input from dimension $n$ to a larger
internal dimension $N$ by appending zeros, $D_\sigma$ is a diagonal matrix of
random signs, $F_N$ is one normalized Hadamard transform, $\Pi$ is a uniformly
random permutation, $D_\gamma$ is a diagonal matrix of independent standard
Gaussians, and $B$ partitions the $N$ coordinates into $m$ consecutive blocks
of equal size and sums each block. We call the block-summing step
\emph{Gaussian pooling}, since each row of the sketch pools an entire block of
Gaussian-weighted coordinates; the blocks are \emph{balanced} in the sense
that, with high probability, the Hadamard transform and the random permutation
spread the subspace relevant to the regression so evenly that every block
carries a nearly equal share of its energy.

The intuition behind the construction is to expose the randomness in two
stages. The first stage, the randomized Hadamard transform followed by the
random permutation, flattens and shuffles the (padded) column space of $A$
together with the residual direction, so that every block becomes a nearly
isotropic copy of this fixed $(d+1)$-dimensional subspace. Conditional on the
first stage, the second stage takes over: because the pooling blocks are
disjoint, the rows of the sketched problem are independent Gaussian vectors,
and the sketched least squares becomes an exact Gaussian regression whose
noise is independent of the entire design. The inverse sketched Gram matrix
may then depend on the design in an arbitrary fashion, and no fixed-vector
guarantee is ever applied to a sketch-dependent direction. With probability at
least $1-\delta$, the resulting estimator satisfies the fixed-direction
guarantee with $m=O(\epsilon^{-2}d\log(1/\delta))$ rows, and the simultaneous
guarantee over all $d$ coordinates with $m=O(\epsilon^{-2}d\log(d/\delta))$
rows. The transform uses a single Hadamard pass, but we emphasize that it is
not the canonical {\sf SRHT}: the analysis works with a padded internal
dimension $N=\widetilde{O}(n+\epsilon^{-2}d^3)$.

\subsection{Main Result}

Let $A\in\R^{n\times d}$ have full column rank, let
$b\in\R^n$, and write
$
  x^\star:= {\arg\min}_x \|Ax-b\|_2,
 $
  $\widehat{x}:= {\arg\min}_x \|SAx-Sb\|_2.
$
For a fixed direction $a\in\R^d$, the target is
\begin{equation}
  |a^\top(\widehat{x}-x^\star)|
  \leq
  \frac{\epsilon}{\sqrt{d}}\,
  \|a\|_2\,\|Ax^\star-b\|_2\,\|A^\dagger\|_{\op}.
  \label{eq:target}
\end{equation}
Taking $a=e_j$ and union bounding over $j\in[d]$ gives the corresponding
$\ell_\infty$ estimate.

Theorem 1.1 of~\cite{syyz23} claims Eq.~\eqref{eq:target} for the canonical SRHT with
$m=O(\epsilon^{-2}d\log^3(n/\delta))$ rows. As we explain in
Section~\ref{sec:original_dependency_bug}, the proof of that claim is
incomplete, and the issue cannot be repaired within the same argument. What we
prove instead is the following.

\begin{theorem}[Main result]\label{thm:main}
Fix $0<\epsilon\leq 1$ and $0<\delta<1/2$. There is a distribution over
oblivious linear maps $S\in\R^{m\times n}$, each of which can be applied with a
single randomized Hadamard transform, such that for every fixed $A,b,a$, a single
draw of $S$ satisfies Eq.~\eqref{eq:target} with probability at least
$1-\delta$ when
$
  m=O(\epsilon^{-2}d\log(1/\delta)).
$
The guarantee holds for all $d$ coordinates simultaneously using
$
  m=O(\epsilon^{-2}d\log(d/\delta))
$ 
rows, with the failure parameter $\delta/d$ used throughout the construction. The map is
the balanced Gaussian-pooled transform defined in Definition~\ref{def:balanced_gaussian_pooled_sketch}. Its padded
internal dimension $N$ and total classical runtime $T$ are
\[
  N=\widetilde{O}(n+\epsilon^{-2}d^3),
  \qquad
  T=O(Nd\log N+\Tmat(d,m,d)+d^\omega)
   =\widetilde{O}(nd+\epsilon^{-2}d^4).
\]
Here $\Tmat(a,b,c)$ denotes the arithmetic time required to multiply an
$a\times b$ matrix by a $b\times c$ matrix, and $\widetilde{O}$
suppresses logarithms in $n,d,1/\epsilon,1/\delta$.
\end{theorem}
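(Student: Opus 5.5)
The plan is to reduce to the case $A=U$ with orthonormal columns, expose the randomness in two stages, and read off the estimator from an exact conditional Gaussian regression. Write $A=UR$ with $U\in\R^{n\times d}$ orthonormal and $r:=b-Ax^\star$, so that $U^\top r=0$. Put $y:=R(\widehat{x}-x^\star)$. Then $a^\top(\widehat{x}-x^\star)=(R^{-\top}a)^\top y$ and $\|R^{-\top}a\|_2\le\|a\|_2\|A^\dagger\|_{\op}$, so it suffices to prove $|c^\top y|\le\frac{\epsilon}{\sqrt d}\|c\|_2\|r\|_2$ for a fixed unit vector $c$. The normal equations give
\[
y=(U^\top S^\top SU)^{-1}U^\top S^\top S r .
\]
Let $G:=\Pi F_N D_\sigma J$. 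Let $W:=GU\in\R^{N\times d}$ and $w:=Gr$; these are still orthogonal after the first stage. Denote by $W_k\in\R^{N/m\times d}$ and $w_k$ the restrictions to block $k$. Conditional on $(\sigma,\Pi)$, row $k$ of $SU$ is $g_k^\top W_k$ and row $k$ of $Sr$ is $g_k^\top w_k$, where $g_k\sim N(0,I_{N/m})$ are independent across blocks because the blocks are disjoint.

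\textbf{Step 1 (balancing, first stage).} I would show that with probability $1-\delta/2$ over $(\sigma,\Pi)$, every block satisfies three conditions:
\[
\|W_k^\top W_k-\tfrac{1}{m}I_d\|_{\op}\le\tfrac{\eta}{m},\qquad \bigl|\|w_k\|_2^2-\tfrac{1}{m}\|r\|_2^2\bigr|\le\tfrac{\eta}{m}\|r\|_2^2,\qquad \|W_k^\top w_k\|_2\le\tfrac{\eta}{m}\|r\|_2,
\]
where $\eta=\epsilon/\mathrm{polylog}$ or $\eta\approx\epsilon/\sqrt d$ as needed. The argument has two parts. First, the randomized Hadamard flattening makes every row of $[W\ w/\|r\|]$ have squared norm $O(d\log(N/\delta)/N)$. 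Second, the random permutation assigns rows to blocks by sampling without replacement, so matrix Bernstein/Chernoff for sampling without replacement gives concentration of each block's Gram matrix around its mean, which is $\frac1m$ times $W^\top W=I$ and $W^\top w=0$. The block size $N/m$ must be large enough to reach accuracy $\eta$. With $\eta\sim\epsilon/\sqrt d$ this forces $N/m\gtrsim d^2\epsilon^{-2}\mathrm{polylog}$, and hence $N=\widetilde O(n+\epsilon^{-2}d^3)$. This accounting is what I expect to dictate the padding.

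\textbf{Step 2 (exact Gaussian regression, second stage).} Conditionally, the pairs $(x_k,z_k):=(W_k^\top g_k,\,w_k^\top g_k)$ are independent jointly Gaussian. Decompose $z_k=\beta_k^\top x_k+\xi_k$ with $\xi_k$ independent of $x_k$. Here $\beta_k=(W_k^\top W_k)^{-1}W_k^\top w_k$ and $\mathrm{Var}(\xi_k)=\|w_k\|^2-w_k^\top W_k(W_k^\top W_k)^{-1}W_k^\top w_k$. Let $X=SU$, with rows $x_k^\top$. Then
\[
y=(X^\top X)^{-1}X^\top z=(X^\top X)^{-1}\textstyle\sum_k x_k x_k^\top\beta_k+(X^\top X)^{-1}X^\top\xi .
\]
The vector $\xi$ is independent of all of $X$, which is the key conditional independence. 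Conditional on $X$, the noise term is therefore Gaussian with covariance $(X^\top X)^{-1}X^\top\Sigma_\xi X(X^\top X)^{-1}\preceq\max_k\mathrm{Var}(\xi_k)\,(X^\top X)^{-1}$. The rows $x_k$ have covariance $\approx\frac1m I$ by Step 1, so standard Gaussian-design bounds give $X^\top X\succeq\frac12 I$ once $m\gtrsim d+\log(1/\delta)$. Hence $c^\top(\text{noise})$ is Gaussian with variance $O(\|r\|^2/m)$, and it is at most $O(\|r\|\sqrt{\log(1/\delta)/m})\le\frac{\epsilon}{\sqrt d}\|r\|$ when $m=C\epsilon^{-2}d\log(1/\delta)$.

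\textbf{Step 3 (bias term).} This is the main obstacle. The bias is $(X^\top X)^{-1}\sum_k x_kx_k^\top\beta_k$ with $\|\beta_k\|\le O(\eta)\|r\|$. Bounded crudely, it is $O(\eta)\|r\|$, which needs $\eta\le\epsilon/\sqrt d$; this is exactly what the larger $N$ buys in Step 1. I would take that route rather than seek cancellation. Then $|c^\top\text{bias}|\le\|(X^\top X)^{-1}\|\cdot\|X^\top X\|\max_k\|\beta_k\|\lesssim\eta\|r\|$. One caveat: when the $\beta_k$ differ across blocks, I would bound $\|\sum_k x_kx_k^\top\beta_k\|\le\|X^\top X\|\max\|\beta_k\|$ via the Cauchy–Schwarz form $\sum_k|u^\top x_k||x_k^\top\beta_k|$, together with $\max\|x_k\|$ control. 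Combining the three steps and union bounding over $c=R^{-\top}e_j/\|\cdot\|$ with $\delta/d$ gives the simultaneous statement.

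\textbf{Runtime.} Padding costs nothing. $F_N$ on $d+1$ columns costs $O(Nd\log N)$; the permutation, Gaussian scaling and block sums cost $O(Nd)$; forming $X^\top X$ costs $\Tmat(d,m,d)$; and solving costs $d^\omega$.
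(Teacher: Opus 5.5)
\textbf{Gap: the stated bounds on $N$ and $T$ are not reached.} The problem starts in your Step 3. Because you bound the bias crudely by $O(\eta)\|r\|_2$, you must take $\eta\lesssim\epsilon/\sqrt d$. Step 1 then needs block size $s=N/m\gtrsim\eta^{-2}d\,\mathrm{polylog}=\epsilon^{-2}d^2\,\mathrm{polylog}$. This is not an artifact of matrix Chernoff: after flattening, $\E\|W_k^\top w_k\|_2^2$ is typically of order $d\|r\|_2^2/(m^2s)$, and you need it below $\eta^2\|r\|_2^2/m^2$. Separately, the noise term forces $m=\Theta(\epsilon^{-2}d\log(1/\delta))$. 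Hence $N=ms\gtrsim\epsilon^{-4}d^3\,\mathrm{polylog}$, and your line ``hence $N=\widetilde O(n+\epsilon^{-2}d^3)$'' drops the $\epsilon^{-2}$ carried by $m$. Your argument does give the coordinate-wise guarantee with the right row count. But it only yields $N=\widetilde O(n+\epsilon^{-4}d^3)$ and $T=\widetilde O(nd+\epsilon^{-4}d^4)$. Since $\widetilde O$ hides only logarithms in $1/\epsilon$, this does not match the theorem.

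\textbf{Missing idea: the cancellation you chose not to use.} Since $G$ is orthogonal and $U^\top r=0$, you have exactly $\sum_k W_k^\top W_k\beta_k=\sum_k W_k^\top w_k=U^\top r=0$. So the bias numerator equals $\sum_k(x_kx_k^\top-W_k^\top W_k)\beta_k$, a sum of conditionally independent centered terms. For a unit vector $u$, each term $u^\top(x_kx_k^\top-W_k^\top W_k)\beta_k$ is a centered Gaussian product. Its $\psi_1$-norm is $\lesssim\sqrt{u^\top W_k^\top W_ku}\,\sqrt{\beta_k^\top W_k^\top W_k\beta_k}\lesssim\eta\|r\|_2/m$. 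Bernstein plus a $1/2$-net then give a bias of order $\eta\|r\|_2\sqrt{(d+\log(1/\delta))/m}\lesssim\eta\epsilon\|r\|_2$. That extra factor $\epsilon$ lets you take $\eta=c_0/\sqrt d$, independent of $\epsilon$, so $s=\widetilde O(d^2)$ and $N=\widetilde O(n+\epsilon^{-2}d^3)$. This is exactly Lemma~\ref{lem:exact_conditional_regression}(c)--(d) together with Lemma~\ref{lem:design_bias_concentration}(b) in the paper.

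\textbf{Smaller issue in the crude bound.} Even for the crude estimate, going through $\max_k\|x_k\|_2$ loses a further $\sqrt d$, since $\|x_k\|_2^2\approx d/m$. The correct crude estimate is $\|X\|\bigl(\sum_k(x_k^\top\beta_k)^2\bigr)^{1/2}$, using that this Gaussian sum concentrates around $\sum_k\beta_k^\top W_k^\top W_k\beta_k\lesssim\eta^2\|r\|_2^2$.

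The rest of your outline matches the paper: flattening plus sampling without replacement, the residualization with $\xi$ independent of $X$ given the first stage, the conditional Gaussian noise bound, and the runtime bookkeeping.
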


\begin{remark}
For the canonical one-shot SRHT, Theorem 10 of Price, Song, and Woodruff~\cite{psw17}
is the rigorous dependency-safe baseline. Under their assumptions it gives
the same fixed-direction regression conclusion with
\[
  m=
  O(
    \epsilon^{-2}d^{1+\Theta(\sqrt{\log\log n/\log d})}
  )
\]
and polynomially small failure probability. Its Neumann-expansion proof
handles reuse of the same sketch, but the row count is $d^{1+o(1)}$, not
$d\operatorname{polylog}(n)$. Obtaining the latter for the canonical SRHT
without the padded block construction would require an argument beyond this
paper, for example, a suitable anisotropic inverse-Gram or
fluctuation-averaging estimate; we leave closing this gap as an open problem.
\end{remark}

\paragraph{Roadmap.}
In Section~\ref{sec:original_dependency_bug}, we describe the dependency issue
in the argument of~\cite{syyz23} and construct an explicit instance showing
that the gap is genuine. In Section~\ref{sec:gaussian_diagonal_fails}, we show
that the natural fix of replacing the Rademacher diagonal in the SRHT with a
Gaussian diagonal fails as well. In
Section~\ref{sec:balanced_gaussian_pooled_transform}, we define the balanced
Gaussian-pooled transform and explain the conditional-regression viewpoint
that repairs the argument. In Section~\ref{sec:conditional_pooling}, we show
that, conditional on the Hadamard-and-permutation stage, the pooled rows are
independent Gaussian vectors with explicit covariances. In Sections~\ref{sec:block_geometry}
and~\ref{sec:conditional_regression}, we develop the block geometry supplied
by the randomized Hadamard transform and the exact conditional Gaussian
regression representation. In Section~\ref{sec:concentration}, we prove
concentration for the conditional design and bias, and record that our
transform is an oblivious subspace embedding. In
Sections~\ref{sec:core_lemma} and~\ref{sec:regression_consequence}, we prove
the corrected core lemma and transfer it to least squares. Finally, in
Section~\ref{sec:runtime}, we bound the running time of our sketch.

\section*{Acknowledgement \& AI Disclosure}

The sketch construction is developed by GPT Codex 5.6 Sol and Claude Code Fable 5. Proofs are generated part by GPT Codex 5.6 Sol and authors, and are verified independently by Claude Code Fable 5 and authors. The authors take full responsibility for verifying all claims and for the paper's content.

\section{The original dependency bug}
\label{sec:original_dependency_bug}

Take a thin singular value decomposition $A=U\Sigma V^\top$ 
and define the least-squares residual
$
  r:=b-Ax^\star.
$ 
The normal equations give $U^\top r=0$. Whenever $SU$ has full column rank,
\[
  \widehat{x}-x^\star
  =
  V\Sigma^{-1}(U^\top S^\top SU)^{-1}U^\top S^\top Sr.
\]
Left-multiplying the preceding identity by $a^\top$ and using the notation defined below gives
\begin{equation}
  a^\top(\widehat{x}-x^\star)=c^\top G^{-1}h,
  \label{eq:contrast}
\end{equation}
where
\[
  c:=\Sigma^{-1}V^\top a,
  \qquad
  G:=U^\top S^\top SU,
  \qquad
  h:=U^\top S^\top Sr.
\]
Define $u(S):=UG^{-1}c=U(U^\top S^\top SU)^{-1}c$. Since $G$ is
symmetric, $c^\top G^{-1}h=u(S)^\top S^\top Sr$, and since
$U^\top r=0$, we have $u(S)^\top r=0$. Hence Eq.~\eqref{eq:contrast}
can be written in the exact OCE form
$
  c^\top G^{-1}h=u(S)^\top S^\top Sr=u(S)^\top(S^\top S-I)r.
$

\begin{definition}[Oblivious coordinate-wise embedding,~\cite{syyz23}]
\label{def:oce}
Fix $\beta>0$, $0<\delta<1$, and positive integers $m,n$. A distribution
$\mathcal{D}$ over matrices $S\in\R^{m\times n}$ is a
$(\beta,\delta,n)$-oblivious coordinate-wise embedding (OCE) if, for every
fixed pair $g,h\in\R^n$ chosen independently of $S$, a draw
$S\sim\mathcal{D}$ satisfies
\[
  |g^\top(S^\top S-I_n)h|
  \leq \frac{\beta}{\sqrt{m}}\,\|g\|_2\|h\|_2
\]
with probability at least $1-\delta$.
\end{definition}

The original proof of~\cite{syyz23} applies
Definition~\ref{def:oce} to the pair $(u(S),r)$. This is not allowed because
$u(S)$ depends on the same random sketch $S$. The definition has the
quantifiers
\[
  \text{for every fixed $g,h$,}\qquad
  \Pr_S[\text{failure for $g,h$}]\leq\delta,
\]
not
\[
  \Pr_S[\text{failure for an adaptively chosen $u(S),r$}]\leq\delta.
\]
Conditioning first on the event that $S$ embeds the column space of $A$ (Definition~\ref{def:ose}) does not help: it changes the law of $S$,
while $u(S)$ still depends on the remaining randomness. A norm bound on
$u(S)$ does not make it fixed, independent, or covered by a uniform OCE
statement.

\subsection{The gap is genuine, not merely formal}

The following compact example shows that OSE plus fixed-vector OCE does not
imply the adaptive inverse-Gram conclusion.

\begin{proposition}[OSE and fixed-vector OCE are insufficient]
For every sufficiently large $d$, there is a distribution $\mathcal{D}_d$ on
$S\in\R^{(d+1)\times(d+1)}$ such that all singular values of $S$
lie in $[0.9,1.1]$ deterministically and, for every fixed
$g,h\in\R^{d+1}$ and every $0<\delta<1$,
\[
  \Pr_{S\sim\mathcal{D}_d}[
    |g^\top(S^\top S-I_{d+1})h|
    >
    C\sqrt{\frac{\log(2/\delta)}{d}}\,\|g\|_2\|h\|_2
  ]
  \leq\delta.
\]
Equivalently, after changing $C$ by an absolute factor,
$\mathcal{D}_d$ is a
$(C\sqrt{\log(2/\delta)},\delta,d+1)$-OCE according to
Definition~\ref{def:oce}. Nevertheless, a fixed one-dimensional regression
contrast is bounded below by an absolute constant with probability one.
\end{proposition}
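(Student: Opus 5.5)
We construct $\mathcal{D}_d$ as an explicit rank-four perturbation of the identity. The perturbation is built from a single random sign vector. It is tuned so that fixed pairs $(g,h)$ see only an averaged, $O(1/\sqrt{d})$ effect, while the sketch-dependent direction $u(S)$ of Eq.~\eqref{eq:contrast} lines up with the perturbation and picks up a constant. Work in $\R^{d+1}$ with standard basis $e_1,\dots,e_{d+1}$. Draw $\xi\in\{\pm 1/\sqrt{d}\}^d$ with independent uniform signs, and let $v:=(\xi,0)\in\R^{d+1}$, a unit vector in $\mathrm{span}(e_1,\dots,e_d)$. Fix small absolute constants, say $\eta=\eta'=1/25$, and set
\[
  E:=\eta\,(e_{d+1}v^\top+v e_{d+1}^\top)+\eta'\,(e_1 v^\top+v e_1^\top),
  \qquad S:=(I_{d+1}+E)^{1/2}.
\]
Each symmetric rank-two piece has operator norm at most $2$, so $\|E\|_{\op}\le 2(\eta+\eta')<0.19$. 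Hence the eigenvalues of $S^\top S=I+E$ lie in $[0.81,1.19]$, and the singular values of $S$ lie in $[0.9,1.1]$ deterministically.

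\textbf{OCE for fixed pairs.} For fixed $g,h$ we have $g^\top(S^\top S-I)h=g^\top E h$. This is a sum of four terms of the form $\eta\, g_{d+1}\langle v,h\rangle$, $\eta\, h_{d+1}\langle v,g\rangle$, $\eta'\, g_1\langle v,h\rangle$, and $\eta'\, h_1\langle v,g\rangle$. Each $\langle v,h\rangle=\sum_{i\le d}\xi_i h_i$ is a Rademacher sum with coefficients of size $h_i/\sqrt d$. By Hoeffding, $|\langle v,h\rangle|\le \sqrt{2\log(8/\delta)/d}\,\|h\|_2$ with probability at least $1-\delta/4$, and likewise for $g$. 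A union bound over the four terms, together with $|g_{d+1}|,|g_1|\le\|g\|_2$, gives the claimed tail bound $C\sqrt{\log(2/\delta)/d}\,\|g\|_2\|h\|_2$ for an absolute $C$. Since $m=d+1$, this is the stated OCE parameter.

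\textbf{The adaptive contrast.} Take $n=d+1$ and $A=[e_1,\dots,e_d]$, so $U=A$ and $\Sigma=V=I_d$. Take $b=e_{d+1}$, so $x^\star=0$, $r=e_{d+1}$, $\|r\|_2=1$ and $\|A^\dagger\|_{\op}=1$. Take $a=c=e_1$. Then
\[
  h=U^\top(I+E)e_{d+1}=\eta\,\xi,
  \qquad
  G=I_d+\eta'(e_1\xi^\top+\xi e_1^\top),
\]
and $a^\top(\widehat x-x^\star)=\eta\, e_1^\top G^{-1}\xi$.

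Solve $Gy=\xi$ exactly inside the two-dimensional invariant subspace $\mathrm{span}(e_1,\xi)$, using the ansatz $y=\alpha e_1+\beta\xi$. Since $\xi_1=\pm1/\sqrt d$ and $\|\xi\|_2=1$, this reduces to a $2\times2$ linear system whose coefficients differ from those of the limiting system by $O(1/\sqrt d)$ \emph{deterministically}. The limiting system is $\alpha+\eta'\beta=0$, $\beta+\eta'\alpha=1$, which gives $\alpha=-\eta'/(1-\eta'^2)$. Hence
\[
  e_1^\top G^{-1}\xi=-\frac{\eta'}{1-\eta'^2}+O(d^{-1/2})
\]
for every sign pattern. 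The contrast then has absolute value at least $\eta\eta'/2$ for all large $d$, with probability one. By contrast, the right-hand side of Eq.~\eqref{eq:target} is only $\epsilon/\sqrt d$.

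The main obstacle, and the step to check carefully, is this last computation. It must give a \emph{deterministic} constant lower bound uniformly over all $\xi$, rather than merely a typical one. The key point is that the only $\xi$-dependent scalars in the $2\times 2$ system are $\xi_1=\pm1/\sqrt d$ and $\|\xi\|_2=1$, so the perturbation bound is uniform. The OCE part is routine Hoeffding.
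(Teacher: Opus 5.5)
Your proof is correct and is essentially the paper's construction. The paper uses the same perturbation $S=(I+E)^{1/2}$ with $E=-\rho(e_1v^\top+ve_1^\top)+\alpha(e_{d+1}v^\top+ve_{d+1}^\top)$, the same instance $A=[e_1,\dots,e_d]$, $b=e_{d+1}$, $a=e_1$, and the same $2\times 2$ computation on $\mathrm{span}(e_1,v)$. The only difference is that the paper draws $v$ uniformly from the unit sphere of $\mathrm{span}(e_2,\dots,e_d)$ and uses spherical concentration rather than Hoeffding; since that $v$ is orthogonal to $e_1$, the contrast is exactly $\alpha\rho/(1-\rho^2)$, whereas your Rademacher $v$ has $v_1=\pm d^{-1/2}$ and needs your valid, uniform $O(d^{-1/2})$ perturbation step.
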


\begin{proof}
Let $U:=[e_1,\ldots,e_d]$, draw $v$ uniformly from the unit sphere in
$\operatorname{span}(e_2,\ldots,e_d)$, and fix
$\rho:=1/20$ and $\alpha:=1/20$. Define
\[
  M(v):=
  \begin{bmatrix}
    B(v) & \alpha v\\
    \alpha v^\top & 1
  \end{bmatrix},
  \qquad
  B(v):=I_d-\rho(e_1v^\top+ve_1^\top),
  \qquad
  S:=M(v)^{1/2}.
\]
Let $\mathcal{D}_d$ denote the resulting distribution of $S$.
On $\operatorname{span}(e_1,v,e_{d+1})$, the perturbation $M-I$ has
eigenvalues $0$ and $\pm\sqrt{\rho^2+\alpha^2}$; it vanishes on the
orthogonal complement. Since $\sqrt{\rho^2+\alpha^2}<1/10$, the matrix $M$
is positive definite. Moreover, $S^\top S=M$, so the singular values of $S$
are
\[
  1,\qquad
  \sqrt{1-\sqrt{\rho^2+\alpha^2}},\qquad
  \sqrt{1+\sqrt{\rho^2+\alpha^2}},
\]
together with additional copies of $1$. They all lie in $[0.9,1.1]$.

Let $\mathcal{H}:=\operatorname{span}(e_2,\ldots,e_d)$ and let
$P_{\mathcal{H}}$ be the orthogonal projector onto $\mathcal{H}$. For fixed
$g,h\in\R^{d+1}$, direct expansion gives
\[
  g^\top(M-I)h=v^\top w,
\]
where the fixed vector
\[
  w:=P_{\mathcal{H}}[
    -\rho(g_1h_{1:d}+h_1g_{1:d})
    +\alpha(h_{d+1}g_{1:d}+g_{d+1}h_{1:d})
  ]
\]
satisfies
\[
  \|w\|_2\leq 2(\rho+\alpha)\|g\|_2\|h\|_2.
\]
Consequently, spherical concentration~\cite[Lemma~2.2]{dg03} gives a
universal constant $C>0$ such that, for every $0<\delta<1$,
\[
  \Pr[
    |g^\top(S^\top S-I_{d+1})h|
    >
    C\sqrt{\frac{\log(2/\delta)}{d}}\,\|g\|_2\|h\|_2
  ]
  \leq\delta.
\]
Since the sketch has $m=d+1$ rows, increasing $C$ by an absolute factor shows
that $\mathcal{D}_d$ is a
$(C\sqrt{\log(2/\delta)},\delta,d+1)$-OCE according to
Definition~\ref{def:oce}.

Finally, take $A:=U$, $b:=e_{d+1}$, $r:=e_{d+1}$, and $a:=e_1$. Then
$U^\top r=0$, so the least-squares minimizer is $x^\star=0$. Since
$S^\top S=M$,
\[
  (SA)^\dagger Sb=(U^\top MU)^{-1}U^\top Mr=B(v)^{-1}(\alpha v).
\]
On the ordered basis $(e_1,v)$ of $\operatorname{span}(e_1,v)$,
\[
  B(v)=
  \begin{bmatrix}
    1&-\rho\\
    -\rho&1
  \end{bmatrix},
  \qquad
  B(v)^{-1}
  =\frac{1}{1-\rho^2}
  \begin{bmatrix}
    1&\rho\\
    \rho&1
  \end{bmatrix}.
\]
Consequently,
\[
  |e_1^\top(SA)^\dagger Sb|
  =\frac{\alpha\rho}{1-\rho^2}.
\]
This is independent of $d$, whereas the claimed fixed-OCE conclusion would
tend to zero as $d\to\infty$.
\end{proof}

\section{Why a Gaussian diagonal in the usual position also fails}
\label{sec:gaussian_diagonal_fails}

One might replace the Rademacher diagonal in $m^{-1/2}PHD$ by a Gaussian
diagonal. This does not repair the argument. Let $H$ be an unnormalized
Hadamard matrix, let $P$ sample $m$ rows independently and uniformly, and
put
\[
  S:=m^{-1/2}PHD_\gamma,
  \qquad
  K:=m^{-1}H^\top P^\top PH.
\]
Here $D_\gamma:=\operatorname{diag}(\gamma_1,\ldots)$, with
$\gamma_j\overset{\mathrm{iid}}{\sim}\mathcal{N}(0,1)$, independently of
$P$. For two distinct Hadamard columns, $K_{11}=K_{22}=1$ and
\[
  \rho:=K_{12}=\frac{1}{m}\sum_{k=1}^m\zeta_k
\]
for independent Rademachers $\zeta_k$. With
$
  u:=\frac{e_1+e_2}{\sqrt{2}}, 
  r:=\frac{e_1-e_2}{\sqrt{2}},
$
we have $u^\top r=0$, but the one-dimensional sketched regression
coefficient is
\begin{equation}
  \frac{u^\top S^\top Sr}{u^\top S^\top Su}
  =
  \frac{\gamma_1^2-\gamma_2^2}
       {\gamma_1^2+\gamma_2^2+2\rho\gamma_1\gamma_2}.
  \label{eq:gaussian_diagonal_failure}
\end{equation}
Hoeffding gives
$\Pr[|\rho|>1/2]\leq 2e^{-m/8}$. Independently, the ratio
$R_\gamma:=\gamma_1/\gamma_2$ is standard Cauchy. The two disjoint events
$|R_\gamma|\geq 2$ and $|R_\gamma|\leq 1/2$ therefore give
\[
  \Pr[\max(\gamma_1^2,\gamma_2^2)\geq 4\min(\gamma_1^2,\gamma_2^2)]=\Pr[|R_\gamma|\geq 2]+\Pr[|R_\gamma|\leq 1/2]=2-\frac{4}{\pi}\arctan 2.
\]
To verify the claimed constant, put
$a:=\max(\gamma_1^2,\gamma_2^2)$ and
$b:=\min(\gamma_1^2,\gamma_2^2)$. On the imbalance event, $a\geq 4b$.
If also $|\rho|\leq 1/2$, then
\[
  |\gamma_1^2-\gamma_2^2|=a-b\geq\frac{3a}{4},
\]
whereas
\[
  |\gamma_1^2+\gamma_2^2+2\rho\gamma_1\gamma_2|
  \leq a+b+\sqrt{ab}
  \leq\frac{7a}{4}.
\]
Thus the absolute value in Eq.~\eqref{eq:gaussian_diagonal_failure} is at
least $3/7$. The two events are independent, so this happens with probability
at least
\[
  (2-\frac{4}{\pi}\arctan 2)(1-2e^{-m/8})
\]
for arbitrarily large $m$. Our construction instead places the Gaussian
weights \emph{after} norm flattening; disjoint pooling supplies the independence
that drives our analysis.

\section{The balanced Gaussian-pooled transform}
\label{sec:balanced_gaussian_pooled_transform}

We begin with the formal definition of our sketch; the remainder of this
section explains the ideas behind its analysis.

\begin{definition}[Balanced Gaussian-pooled sketch]
\label{def:balanced_gaussian_pooled_sketch}
Round the desired output row count $m$ up to a power of two, fix
$\kappa\in(0,1/2)$, and define
$
  \Lambda:=\log(Cnmd/(\kappa\delta)).
$
Choose $s$ to be the smallest power of two satisfying
\[
  s\geq
  \max\{
    \lceil\frac{n}{m}\rceil,
    C\kappa^{-2}(d+1)\Lambda^2
  \},
  \qquad
  N:=ms.
\]
Then $N$ is a power of two and $N\geq n$. Let
$J:\R^n\to\R^N$ append zeros and let
$F_N:=H_N/\sqrt{N}$ be the normalized orthogonal Hadamard matrix.

Draw independently:
\begin{itemize}
  \item a Rademacher diagonal $D_\sigma\in\R^{N\times N}$;
  \item a uniform permutation matrix $\Pi\in\R^{N\times N}$;
  \item a Gaussian diagonal
  $D_\gamma:=\operatorname{diag}(\gamma_1,\ldots,\gamma_N)$, with
  $\gamma_j\overset{\mathrm{iid}}{\sim}\mathcal{N}(0,1)$.
\end{itemize}
Partition $[N]$ into $m$ consecutive blocks ${\cal B}_1,\ldots,{\cal B}_m$, each of
size $s$. Let $B\in\{0,1\}^{m\times N}$ sum the coordinates in each block, i.e.,
$B_{i,j}:=1$ if $j\in{\cal B}_i$ and $0$ otherwise. Finally, define the
balanced Gaussian-pooled sketch $S\in\R^{m\times n}$ by
\[
  \boxed{S:=BD_\gamma\Pi F_ND_\sigma J.}
\]
The regression solver sees only $SA$ and $Sb$; there is no preliminary
regression and no post-compression of another regression solution.
\end{definition}

The transform is dense in the original coordinates almost surely, but it is
applied in factored form: pad, flip signs, take one Hadamard transform, permute,
multiply by Gaussians, and sum blocks.

As we saw in Sections~\ref{sec:original_dependency_bug}
and~\ref{sec:gaussian_diagonal_fails}, the main obstacle is not subspace
preservation alone, but the dependence created by the inverse sketched Gram
matrix. Write $A=U\Sigma V^\top$, let
$r:=b-Ax^\star$, and set $X:=SU$, $y:=Sr$, $M:=X^\top X$, and
$c:=\Sigma^{-1}V^\top a$. Whenever $M$ is invertible, the normal equations give
$a^\top(\widehat{x}-x^\star)=c^\top M^{-1}X^\top y$. A fixed-vector OCE
bound cannot be applied with the direction $UM^{-1}c$, because this direction
depends on the same sketch; this is precisely the failure mode described in
Section~\ref{sec:original_dependency_bug}. Our proof instead exposes the randomness in stages so that
$M^{-1}$ may depend on the design while, after fixing the first-stage
randomness, the remaining noise is independent of the design.

\paragraph{Stage 1: balanced block geometry.}
The sketch in Definition~\ref{def:balanced_gaussian_pooled_sketch} factors as
$S=BD_\gamma QJ$, where $Q:=\Pi F_ND_\sigma$. We first expose $Q$. For
$r\ne 0$, Lemma~\ref{lem:simultaneous_block_embedding} applies $Q$ to the
padded augmented orthonormal system $[JU,Jr/\|r\|_2]$ and shows that every
balanced block is nearly isotropic simultaneously; the case $r=0$ is handled
separately in the core lemma. Claim~\ref{clm:blockwise_covariance_bounds}
translates this event into a nearly uniform block design covariance $C_i$, a
small design--residual covariance $h_i$, and controlled residual variance
$v_i$. Orthogonality and $U^\top r=0$ also give the exact identities
$\sum_{i=1}^m C_i=I_d$ and $\sum_{i=1}^m h_i=0$ in
Claim~\ref{clm:block_sum_identities}.

\paragraph{Stage 2: exact conditional Gaussian regression.}
Fix a good realization of $Q$ and then expose the Gaussian diagonal
$D_\gamma$. Because the pooling blocks are disjoint,
Claim~\ref{clm:conditional_gaussian_pooling} gives independent, centered,
jointly Gaussian row pairs $(x_i,y_i)$ across $i$, together with their exact
conditional covariances. Definition~\ref{def:conditional_design_gram}
assembles these rows into $X$, $y$, and $M$. Lemma~\ref{lem:exact_conditional_regression}
residualizes each pair as $y_i=x_i^\top\theta_i+\xi_i$ and proves that,
conditional on $Q$, the entire noise vector $\xi$ is independent of the
entire design $X$. Its cancellation and score-decomposition parts give
$X^\top y=Z+X^\top\xi$, where
$Z:=\sum_{i=1}^m(x_ix_i^\top-C_i)\theta_i$. Thus $Z$ is a conditionally centered, design-dependent fluctuation, whereas
$X^\top\xi$ is driven by a Gaussian noise vector $\xi$ that is independent
of $X$ conditional on $Q$. This exact separation is the key repair.

\paragraph{Stage 3: concentration and assembly.}
Lemma~\ref{lem:design_bias_concentration} shows, uniformly for every good
$Q$, that $M$ is well conditioned and $Z$ is small. We then condition further
on $(Q,X)$: the quantities $M^{-1}$ and $Z$ are fixed, while
$c^\top M^{-1}X^\top\xi$ is a centered Gaussian with controlled variance.
The choice $\kappa=c_0/\sqrt d$ supplies the $d^{-1/2}$ factor for the bias,
while the row count controls the Gaussian noise. Lemma~\ref{lem:adaptive_safe_core}
combines the block, design-and-bias, and noise events by the tower property,
and Theorem~\ref{thm:one_shot_coordinate_regression} transfers the resulting
core estimate to least squares and obtains the simultaneous coordinate bound
by a union bound. Figure~\ref{fig:conditional_probability_flow} records the
order of conditioning and the associated failure budget.

\begin{figure}[!ht]
  \centering
  \includegraphics[
    width=\linewidth,
    height=0.68\textheight,
    keepaspectratio
  ]{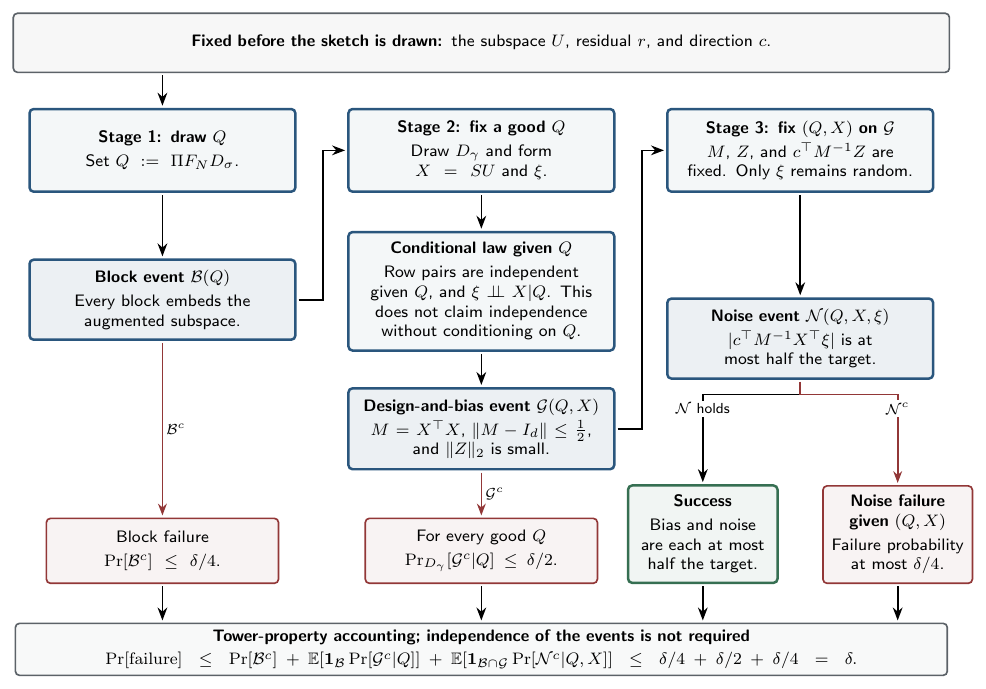}
  \caption{Conditioning hierarchy in the proof. After $Q$ is fixed,
  $D_\gamma$ generates $(X,\xi)$, with $\xi\ind X$ conditional on $Q$.
  The design-and-bias event is measurable with respect to $(Q,X)$; after
  conditioning on $(Q,X)$, only $\xi$ remains random for the noise bound.
  The failure probabilities are combined by the tower property, not by
  independence of the events.}
  \label{fig:conditional_probability_flow}
\end{figure}

\section{Conditional Gaussian pooling}
\label{sec:conditional_pooling}

The first property we extract from the construction is distributional:
conditional on the Hadamard-and-permutation stage, the pooled rows are
independent Gaussian vectors with explicit covariances.

\begin{claim}[Conditional Gaussian pooling and exact block covariances]
\label{clm:conditional_gaussian_pooling}
Put $Q:=\Pi F_ND_\sigma$. Conditional on $Q$, for every $p\geq 1$ and
fixed $T\in\R^{N\times p}$, define
\[
  Y_T:=BD_\gamma QT,\qquad
  \bar T_i:=(QT)_{{\cal B}_i,\ast}\in\R^{s\times p}.
\]
Conditional on $Q$, the rows of $Y_T$ are mutually independent centered
Gaussian vectors, and, for every $i\in[m]$,
\[
  \operatorname{Cov}_\gamma((Y_T)_{i,\ast}^\top|Q)
  =\bar T_i^\top\bar T_i.
\]
Moreover,
\[
  \E_\gamma[Y_T^\top Y_T|Q]=T^\top T.
\]
In particular, taking $T=J$ gives $Y_J=S$; hence, conditional on $Q$, the
rows of $S$ are mutually independent centered Gaussian vectors and
$
  \E_\gamma[S^\top S|Q]=I_n.
$ 
\end{claim}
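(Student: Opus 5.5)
The plan is to exploit that, once $Q$ is fixed, $Y_T$ is a linear function of the Gaussian vector $\gamma=(\gamma_1,\ldots,\gamma_N)$, which is independent of $Q$ by Definition~\ref{def:balanced_gaussian_pooled_sketch}. Since $D_\gamma$, $\Pi$ and $D_\sigma$ are drawn independently, conditioning on $Q$ leaves $\gamma\sim\mathcal{N}(0,I_N)$. So every statement reduces to computing with a linear image of a standard Gaussian vector.

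First I will write the $i$th row explicitly. Since $B$ sums the coordinates in ${\cal B}_i$, we have
\[
  (Y_T)_{i,\ast}^\top=\sum_{j\in{\cal B}_i}\gamma_j\,(QT)_{j,\ast}^\top=\bar T_i^\top\gamma_{{\cal B}_i},
\]
where $\gamma_{{\cal B}_i}\in\R^s$ is the restriction of $\gamma$ to the block. Each row is therefore a linear image of $\gamma_{{\cal B}_i}$, hence a centered Gaussian vector. Its covariance is $\bar T_i^\top\E[\gamma_{{\cal B}_i}\gamma_{{\cal B}_i}^\top]\bar T_i=\bar T_i^\top\bar T_i$. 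For joint Gaussianity and independence, I will stack all rows as a single linear image of $\gamma$, so the rows are jointly Gaussian. Because the blocks are disjoint, row $i$ depends only on $\gamma_{{\cal B}_i}$, and these subvectors are mutually independent. This gives mutual independence of the rows directly, without needing any zero-cross-covariance argument.

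Next I will compute the second-moment identity:
\[
  \E_\gamma[Y_T^\top Y_T\mid Q]=\sum_i\bar T_i^\top\bar T_i=(QT)^\top(QT)=T^\top Q^\top QT,
\]
where the middle equality holds because the blocks partition $[N]$. The only nontrivial input is that $Q$ is orthogonal: $F_N=H_N/\sqrt N$ is orthogonal, and $\Pi$ and $D_\sigma$ are signed permutations, so $Q^\top Q=I_N$ and the expectation equals $T^\top T$.

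Finally, for $T=J$ we have $Y_J=BD_\gamma QJ=S$, and $J^\top J=I_n$ because $J$ appends zeros. Hence $\E_\gamma[S^\top S\mid Q]=I_n$, and the row statements specialize from the general case. There is no real obstacle here. The only point needing care is stating precisely that conditioning on $Q$ does not alter the law of $\gamma$, which follows from the independence of $D_\gamma$ from $(\Pi,D_\sigma)$. This is also why $\bar T_i$, which depends on $Q$, may legitimately be treated as fixed.
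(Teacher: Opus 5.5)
Your proposal is correct and follows essentially the same route as the paper. Both write each row as $\bar T_i^\top g_i$ with $g_i$ the Gaussian block, get independence and the covariance from disjointness of the blocks, sum over the partition and use orthogonality of $Q$ to obtain $T^\top T$, and specialize to $T=J$ using $J^\top J=I_n$.
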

\begin{proof}
For each $i\in[m]$, put $g_i:=(\gamma_j)_{j\in{\cal B}_i}$. Since $B$
sums the coordinates in each block,
$
  (Y_T)_{i,\ast}^\top=\bar T_i^\top g_i.
$
Conditional on $Q$, the vectors $g_i$ are mutually independent standard
Gaussian vectors because the blocks ${\cal B}_1,\ldots,{\cal B}_m$ are
disjoint. This proves the asserted conditional Gaussian law, independence,
and covariance formula. Since the blocks partition $[N]$ and $Q$ is
orthogonal,
\[
  \E_\gamma[Y_T^\top Y_T|Q]=\sum_{i=1}^m \bar T_i^\top\bar T_i=T^\top Q^\top QT=T^\top T.
\]
Taking $T=J$ and using $J^\top J=I_n$ proves the final assertion.
\end{proof}

The isotropy identity alone is not a concentration statement; the conditional
row independence is the stronger property used in the regression decomposition
below.

\section{Block geometry supplied by the randomized Hadamard transform}
\label{sec:block_geometry}

The role of the Hadamard stage is to make every block a very accurate embedding
of the one fixed augmented subspace relevant to the regression. The parameter choice in Definition~\ref{def:balanced_gaussian_pooled_sketch} implies
Eq.~\eqref{eq:block_embedding_condition}, after increasing the universal constant
$C$.

\begin{lemma}[Simultaneous block embedding]
\label{lem:simultaneous_block_embedding}
Let $0<\delta<1/2$, and let $W\in\R^{N\times p}$ be fixed
independently of $(D_\sigma,\Pi)$ and have orthonormal columns, with
$1\leq p\leq d+1$. Put $Q:=\Pi F_ND_\sigma$. Let
$W_i:=(QW)_{{\cal B}_i,\ast}$. If
\begin{equation}
  s\geq
  C\kappa^{-2}p\log^2(Npm/\delta),
  \label{eq:block_embedding_condition}
\end{equation}
then, with probability at least $1-\delta$ over $(D_\sigma,\Pi)$, the event
\[
  \mathcal{E}_{\mathrm{block}}(W)
  :=
  \{
    \|mW_i^\top W_i-I_p\|\leq\kappa
    \text{ for every }i\in[m]
  \}
\]
occurs.
Moreover, if $n\geq d$ and $s,N,\Lambda$ are chosen as in
Definition~\ref{def:balanced_gaussian_pooled_sketch}, then
Eq.~\eqref{eq:block_embedding_condition} holds after increasing the universal
constant $C$, and
\[
  N=O(n+m\kappa^{-2}(d+1)\Lambda^2).
\]
\end{lemma}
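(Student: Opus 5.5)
Expose the randomness in two stages: first the Rademacher diagonal $D_\sigma$, which flattens the rows of $F_ND_\sigma W$; then the permutation $\Pi$, which turns each block $W_i$ into a uniformly random $s$-subset of those flattened rows. Conditional on a good $D_\sigma$, each block covariance $mW_i^\top W_i$ is a scaled sum of rank-one matrices drawn without replacement, and a matrix Chernoff/Bernstein bound plus a union bound over the $m$ blocks gives the event $\mathcal{E}_{\mathrm{block}}(W)$.

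\emph{Flattening step.} Let $V:=F_ND_\sigma W$, with rows $v_1,\ldots,v_N\in\R^p$. Then $V^\top V=I_p$, so $\sum_j v_jv_j^\top=I_p$. For a fixed row index $j$ and a fixed unit vector, $\langle v_j,\cdot\rangle$ is a Rademacher sum with coefficient vector of squared norm at most $1/N$. Hoeffding with a net, or directly Khintchine on $\|v_j\|_2^2$, gives
\[
  \max_j\|v_j\|_2^2\leq C\,\frac{p\log(Np/\delta)}{N}
\]
with probability at least $1-\delta/2$. This is the standard SRHT leverage-flattening bound, which I would use as given.

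\emph{Sampling step.} Condition on such a $D_\sigma$. Since $\Pi$ is uniform, block ${\cal B}_i$ receives a uniformly random subset $T_i$ of size $s$ of $[N]$, and
\[
  mW_i^\top W_i=m\sum_{j\in T_i}v_jv_j^\top .
\]
Its mean is $m\cdot\frac{s}{N}\,I_p=I_p$, because $N=ms$. Each summand has operator norm at most
\[
  L:=m\max_j\|v_j\|^2\leq C\,\frac{p\log(Np/\delta)}{s}.
\]
Matrix Chernoff for sampling without replacement (Tropp; Gross--Nesme, which reduces it to the with-replacement case) then gives, for fixed $i$,
\[
  \Pr\bigl[\|mW_i^\top W_i-I_p\|>\kappa\bigr]\leq 2p\exp\bigl(-c\kappa^2/L\bigr).
\]
A union bound over $i\in[m]$ makes the total failure at most $\delta/2$ provided $\kappa^2/L\gtrsim\log(pm/\delta)$. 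Substituting the bound on $L$, this holds when
\[
  s\geq C\kappa^{-2}p\log(Np/\delta)\log(pm/\delta),
\]
which is implied by Eq.~\eqref{eq:block_embedding_condition}. The product of the two logarithms is the source of the $\log^2$ there.

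\emph{Parameter consequence.} In Definition~\ref{def:balanced_gaussian_pooled_sketch} we have $p\leq d+1$ and $s\geq C\kappa^{-2}(d+1)\Lambda^2$. We also need $\log(Npm/\delta)=O(\Lambda)$. Since $N=ms$ and $s$ is at most twice $\max\{\lceil n/m\rceil,\,C\kappa^{-2}(d+1)\Lambda^2\}$, we get $\log N=O(\log(nmd/(\kappa\delta))+\log\Lambda)=O(\Lambda)$. This settles the condition after enlarging $C$. Finally,
\[
  N=ms\leq 2m\bigl(\lceil n/m\rceil+C\kappa^{-2}(d+1)\Lambda^2\bigr)=O\bigl(n+m\kappa^{-2}(d+1)\Lambda^2\bigr),
\]
using $m\leq n$ or absorbing the rounding of $\lceil n/m\rceil$.

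\emph{Main obstacle.} The delicate point is the self-referential logarithm: $s$ appears inside $\Lambda$ through $N$. I would handle this by bounding $\log N\leq\log(2n)+\log(2Cm\kappa^{-2}(d+1))+2\log\Lambda$ and absorbing $\log\Lambda$ into $\Lambda$. A secondary technicality is the justification of matrix Chernoff without replacement; if preferred, it can be replaced by matrix Bernstein applied to the exchangeable sum, with the variance proxy bounded by $L$.
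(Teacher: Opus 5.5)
Your proposal is correct and follows the paper's proof essentially step for step. Both expose $D_\sigma$ first and flatten the rows of $F_ND_\sigma W$. The paper cites Tropp's Lemma~3.3 for the sharper additive bound $C(p+\log(2N/\delta))/N$, but your $Cp\log(Np/\delta)/N$ also suffices under the $\log^2$ condition. Both then apply matrix Chernoff without replacement to each uniformly random block conditional on $D_\sigma$, union bound over the $m$ blocks, and verify $\log(Npm/\delta)=O(\Lambda)$ and $N<2m\max\{\lceil n/m\rceil,K\}\le 2n+2m+2mK$.

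One small correction: $\Lambda=\log(Cnmd/(\kappa\delta))$ does not depend on $s$ or $N$. The only circularity is that the condition contains $\log N$ with $N=ms$, and your bound on $\log N$ closes exactly that.
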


\begin{proof}
Put $Z_0:=F_ND_\sigma W$ and let $z_j:=(Z_0)_{j,\ast}^\top$. Since
$F_ND_\sigma$ is orthogonal and $W$ has orthonormal columns,
$
  \sum_{j=1}^N z_jz_j^\top=I_p.
  $ 
Randomized-Hadamard flattening~\cite[Lemma~3.3]{t11} gives, except with
probability $\delta/2$,
\begin{equation}
  \max_{j\in[N]}\|z_j\|_2^2
  \leq
  \frac{C(p+\log(2N/\delta))}{N}.
  \label{eq:flattening}
\end{equation}
Fix $D_\sigma$ for which Eq.~\eqref{eq:flattening} holds. For a fixed block
$i$, the set $T_i:=\Pi^{-1}{\cal B}_i$ is a uniform size-$s$ subset of
$[N]$, and
\[
  W_i^\top W_i=\sum_{j\in T_i}z_jz_j^\top.
\]
Then we have
$
  \E_\Pi[W_i^\top W_i|D_\sigma]
  =\frac{s}{N}\sum_{j=1}^N z_jz_j^\top
  =\frac{1}{m}I_p.
$ 
Moreover, Eq.~\eqref{eq:flattening} bounds every positive-semidefinite
summand by
\[
  \|z_jz_j^\top\|
  \leq\frac{C(p+\log(2N/\delta))}{N}.
\]
Matrix Chernoff for sampling without replacement~\cite[Theorem~2.2]{t11},
based on the comparison principle of Gross and Nesme~\cite{gn10}, now gives
\[
  \Pr_\Pi[
      \|mW_i^\top W_i-I_p\|>\kappa
    \,|\,D_\sigma
  ]
  \leq
  2p\exp(
    -\frac{c\kappa^2s}{p+\log(2N/\delta)}
  ).
\]
For completeness, let $L_0:=\log(Npm/\delta)$. In the present parameter
range $L_0\geq\log 2$, and
\[
  \frac{pL_0^2}{p+\log(2N/\delta)}\geq cL_0.
\]
Thus Eq.~\eqref{eq:block_embedding_condition}, with a sufficiently large
universal constant, makes the last probability at most $\delta/(2m)$.
A union bound over the $m$ blocks and the flattening failure proves that
$\mathcal{E}_{\mathrm{block}}(W)$ occurs with the stated probability. The
blocks need not be independent.

For the final assertion, put
$K:=C\kappa^{-2}(d+1)\Lambda^2$. After increasing the universal constant
$C$, we have $K\geq1$. The minimality of the power-of-two choice of $s$ in
Definition~\ref{def:balanced_gaussian_pooled_sketch} gives
\[
  N=ms
  <2m\max\{\lceil n/m\rceil,K\}
  \leq2n+2m+2mK
  =O(n+mK),
\]
where the last step uses $m\lceil n/m\rceil\leq n+m$ and $K\geq1$.
Since $n\geq d$, $p\leq d+1$, and
$\Lambda=\log(Cnmd/(\kappa\delta))$, this bound gives
$\log(Npm/\delta)=O(\Lambda)$. Therefore
$s\geq C\kappa^{-2}(d+1)\Lambda^2$ implies
Eq.~\eqref{eq:block_embedding_condition}, after increasing $C$ once more.
\end{proof}

Assume for now that $r\ne 0$ and identify $U,r$ with their padded images
$JU,Jr$. Since $U^\top U=I_d$ and $U^\top r=0$, the matrix
$
  W:=[U,r/\|r\|_2]\in\R^{N\times(d+1)}
$
has orthonormal columns. Apply Lemma~\ref{lem:simultaneous_block_embedding}
to this $W$ and abbreviate
$\mathcal{E}_{\mathrm{block}}:=\mathcal{E}_{\mathrm{block}}(W)$.
Write
$
  \bar U:=QU,$ $\bar r:=Qr,
$
and, for every $i\in[m]$, define the block restrictions
$
  \bar U_i:=(\bar U)_{{\cal B}_i,\ast}\in\R^{s\times d},
  $$
  \bar r_i:=(\bar r)_{{\cal B}_i}\in\R^{s}.
$
\begin{claim}[Blockwise covariance bounds]
\label{clm:blockwise_covariance_bounds}
Under the standing assumption $r\ne 0$, for every $i\in[m]$, define
\[
  C_i:=\bar U_i^\top\bar U_i,\qquad
  h_i:=\bar U_i^\top\bar r_i,\qquad
  v_i:=\|\bar r_i\|_2^2.
\]
On $\mathcal{E}_{\mathrm{block}}$, the following hold simultaneously for every $i\in[m]$:
\begin{list}{(\alph{enumi})}{
  \usecounter{enumi}
  \renewcommand{\theenumi}{\alph{enumi}}
  \setlength{\leftmargin}{2.5em}
  \setlength{\labelwidth}{2em}
  \setlength{\labelsep}{0.5em}
  \setlength{\itemsep}{0.2em}
  \setlength{\parsep}{0pt}
  \setlength{\topsep}{0.3em}
}
  \item \label{part:block_design_covariance_bound}
  $\frac{1-\kappa}{m}I_d\preceq C_i\preceq\frac{1+\kappa}{m}I_d$.

  \item \label{part:block_cross_covariance_bound}
  $\|h_i\|_2\leq\frac{\kappa}{m}\|r\|_2$.

  \item \label{part:block_response_variance_bound}
  $v_i\leq\frac{1+\kappa}{m}\|r\|_2^2$.
\end{list}
\end{claim}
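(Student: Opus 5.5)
The plan is to read all three bounds off the single matrix inequality supplied by $\mathcal{E}_{\mathrm{block}}$. Set $W=[U,\hat r]$ with $\hat r:=r/\|r\|_2$. The block restriction is then $W_i=(QW)_{{\cal B}_i,\ast}=[\bar U_i,\bar r_i/\|r\|_2]$. Its Gram matrix has the block form
\[
  mW_i^\top W_i=m\begin{bmatrix} C_i & h_i/\|r\|_2\\ h_i^\top/\|r\|_2 & v_i/\|r\|_2^2\end{bmatrix}.
\]
On $\mathcal{E}_{\mathrm{block}}$ we have $\|mW_i^\top W_i-I_{d+1}\|\leq\kappa$ for every $i$. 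Since this is a symmetric matrix, the operator-norm bound is equivalent to
\[
  (1-\kappa)I_{d+1}\preceq mW_i^\top W_i\preceq(1+\kappa)I_{d+1}.
\]

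For part (a), restrict this two-sided Loewner inequality to test vectors $(u,0)$ with $u\in\R^d$. This compresses it to the top-left $d\times d$ block, giving $(1-\kappa)I_d\preceq mC_i\preceq(1+\kappa)I_d$. For part (c), take the test vector $e_{d+1}$. The bottom-right entry then satisfies $mv_i/\|r\|_2^2\leq 1+\kappa$, which is the claimed bound after multiplying by $\|r\|_2^2/m$.

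For part (b), the key observation is that the off-diagonal block of $mW_i^\top W_i-I_{d+1}$ is $mh_i/\|r\|_2$, because $I_{d+1}$ has no off-diagonal part. The operator norm of any submatrix is at most that of the full matrix, so
\[
  \frac{m\|h_i\|_2}{\|r\|_2}\leq\|mW_i^\top W_i-I_{d+1}\|\leq\kappa.
\]
Concretely, this comes from testing the bilinear form on $(u,0)$ and $e_{d+1}$ with $u=h_i/\|h_i\|_2$. Rearranging gives $\|h_i\|_2\leq\kappa\|r\|_2/m$.

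There is no real obstacle here; the claim is a direct unpacking of the block event. The only bookkeeping to get right is the normalization. The last column of $W$ is $r/\|r\|_2$, so the scalings by $\|r\|_2$ and $\|r\|_2^2$ must be tracked carefully. One should also confirm that $J$ and $Q$ preserve orthonormality, so that Lemma~\ref{lem:simultaneous_block_embedding} applies to this $W$ with $p=d+1$. Finally, every bound holds simultaneously in $i$ because $\mathcal{E}_{\mathrm{block}}$ is itself a simultaneous event over all blocks.
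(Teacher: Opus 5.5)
Your proposal is correct and matches the paper's proof essentially verbatim. Both read parts (a), (b), and (c) off the upper-left principal, off-diagonal rectangular, and lower-right scalar compressions of $mW_i^\top W_i-I_{d+1}$ on $\mathcal{E}_{\mathrm{block}}$, with the same $\|r\|_2$ normalization bookkeeping.
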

\begin{proof}
On $\mathcal{E}_{\mathrm{block}}$, we have
$\|mW_i^\top W_i-I_{d+1}\|\leq\kappa$, and the block Gram matrix is
\[
  mW_i^\top W_i-I_{d+1}
  =
  \begin{bmatrix}
    mC_i-I_d & mh_i/\|r\|_2\\
    mh_i^\top/\|r\|_2 & mv_i/\|r\|_2^2-1
  \end{bmatrix}.
\]
The upper-left principal compression satisfies
$\|mC_i-I_d\|\leq\kappa$, which proves
Claim~\ref{clm:blockwise_covariance_bounds}-(\ref{part:block_design_covariance_bound}).
The upper-right rectangular compression satisfies
$\|mh_i/\|r\|_2\|\leq\kappa$, which proves
Claim~\ref{clm:blockwise_covariance_bounds}-(\ref{part:block_cross_covariance_bound}).
The lower-right scalar compression satisfies
$|mv_i/\|r\|_2^2-1|\leq\kappa$; its upper bound proves
Claim~\ref{clm:blockwise_covariance_bounds}-(\ref{part:block_response_variance_bound}).
\end{proof}
\begin{claim}[Block-sum identities]
\label{clm:block_sum_identities}
The block quantities defined in Claim~\ref{clm:blockwise_covariance_bounds} satisfy,
independently of
$\mathcal{E}_{\mathrm{block}}$,
\[
  \sum_{i=1}^m C_i=I_d,\qquad
  \sum_{i=1}^m h_i=U^\top r=0.
\]
\end{claim}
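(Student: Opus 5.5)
The plan is to write each block quantity as a compression of the full transformed matrices onto the coordinates of one block, and then use that the blocks partition $[N]$. The partition turns sums over $i$ into full inner products. Let $P_i\in\R^{N\times N}$ be the diagonal coordinate projector onto ${\cal B}_i$. Then $\bar U_i^\top\bar U_i=\bar U^\top P_i\bar U$ and $\bar U_i^\top\bar r_i=\bar U^\top P_i\bar r$. Since ${\cal B}_1,\ldots,{\cal B}_m$ are disjoint and cover $[N]$, we have $\sum_{i=1}^m P_i=I_N$. This holds deterministically and uses nothing about $\mathcal{E}_{\mathrm{block}}$.

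Summing over $i$ then gives
\[
  \sum_{i=1}^m C_i=\bar U^\top\bar U=U^\top Q^\top QU,
  \qquad
  \sum_{i=1}^m h_i=\bar U^\top\bar r=U^\top Q^\top Q r.
\]
Next I use that $Q=\Pi F_ND_\sigma$ is orthogonal: $\Pi$ is a permutation matrix, $F_N=H_N/\sqrt N$ is orthogonal, and $D_\sigma$ is a sign diagonal. Hence $Q^\top Q=I_N$, and the two sums reduce to $U^\top U$ and $U^\top r$ for the padded $U$ and $r$.

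The only point needing a remark is the padding. The paper identifies $U,r$ with $JU,Jr$, and $J^\top J=I_n$, so $(JU)^\top(JU)=U^\top U=I_d$ and $(JU)^\top(Jr)=U^\top r$. Finally, $U^\top r=0$ by the normal equations, as recorded at the start of Section~\ref{sec:original_dependency_bug}, since $r=b-Ax^\star$ and $A=U\Sigma V^\top$ with $\Sigma V^\top$ invertible. I do not expect any step to be an obstacle; the main care is to stress that the identities hold for every realization of $(D_\sigma,\Pi)$, not just on $\mathcal{E}_{\mathrm{block}}$.
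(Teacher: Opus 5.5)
Your proof is correct and follows the paper's argument: the blocks partition $[N]$, so the block sums collapse to $\bar U^\top\bar U$ and $\bar U^\top\bar r$, and orthogonality of $Q$ together with $U^\top U=I_d$ and $U^\top r=0$ finishes it. Your explicit projectors $P_i$ with $\sum_i P_i=I_N$ and your remark on $J^\top J=I_n$ just spell out steps the paper leaves implicit.
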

\begin{proof}
For the covariance sum,
\[
  \sum_{i=1}^m C_i
  =\bar U^\top\bar U
  =U^\top Q^\top QU
  =U^\top U
  =I_d.
\]
The first step follows from the definition of $C_i$ in
Claim~\ref{clm:blockwise_covariance_bounds} and the fact that the blocks form a
partition of the row index set $[N]$. The second step substitutes $\bar U=QU$.
The third step uses $Q^\top Q=I_N$. The last step uses $U^\top U=I_d$.

For the cross-covariance sum,
\[
  \sum_{i=1}^m h_i
  =\bar U^\top\bar r
  =U^\top Q^\top Qr
  =U^\top r
  =0.
\]
The first step follows from the definition of $h_i$ in
Claim~\ref{clm:blockwise_covariance_bounds} and the fact that the blocks form a
partition of the row index set $[N]$. The second step substitutes $\bar U=QU$ and $\bar r=Qr$.
The third step uses $Q^\top Q=I_N$. The last step uses $U^\top r=0$.
\end{proof}

\section{Conditional Gaussian regression: the adaptive-safe step}
\label{sec:conditional_regression}

Fix a realized $Q$ for which $\mathcal{E}_{\mathrm{block}}$ occurs. All
probabilities in this section are now only over $D_\gamma$.
\begin{definition}[Conditional observations, design, and Gram matrix]
\label{def:conditional_design_gram}
For every $i\in[m]$, put
$g_i:=(\gamma_k)_{k\in {\cal B}_i}\in\R^s$. Conditional on the fixed $Q$,
the vectors $g_1,\ldots,g_m$ are mutually independent and each is distributed
as $\mathcal{N}(0,I_s)$. Define the conditional observation pair by
\[
    x_i:=\bar U_i^\top g_i\in\R^d,\qquad
    y_i:=\bar r_i^\top g_i\in\R.
\]
Equivalently, $x_i^\top$ is the $i$-th row of $SU$ and $y_i$ is the $i$-th
entry of $Sr$. Define the conditional design matrix $X\in\R^{m\times d}$ by
\[
    X_{i,\ast}:=x_i^\top\qquad\text{for every }i\in[m],
\]
and define the response vector and conditional Gram matrix by
\[
    y:=(y_1,\ldots,y_m)^\top\in\R^m,\qquad
    M:=X^\top X=\sum_{i=1}^m x_ix_i^\top\in\R^{d\times d}.
\]
\end{definition}
Apply Claim~\ref{clm:conditional_gaussian_pooling} with
$T:=[U,r]\in\R^{N\times(d+1)}$, using the padded identification
above. Since $(QT)_{{\cal B}_i,\ast}=[\bar U_i,\bar r_i]$, the pairs
$(x_i,y_i)$ are, conditional on this fixed $Q$, mutually independent across
$i$, centered jointly Gaussian, and have covariance
\[
    \begin{bmatrix}
        C_i & h_i\\
        h_i^\top & v_i
    \end{bmatrix}.
\]

\begin{lemma}[Exact conditional regression decomposition]
\label{lem:exact_conditional_regression}
For the standing $r\ne 0$ construction and the fixed $Q$ satisfying
$\mathcal{E}_{\mathrm{block}}$,
Claim~\ref{clm:blockwise_covariance_bounds}-(\ref{part:block_design_covariance_bound})
implies that every $C_i$ is positive definite. Define
\[
    \theta_i:=C_i^{-1}h_i,\qquad
    \tau_i^2:=v_i-h_i^\top C_i^{-1}h_i.
\]
Define $\xi_i:=y_i-x_i^\top\theta_i$ and
$\xi:=(\xi_1,\ldots,\xi_m)^\top\in\R^m$. Conditional on this fixed $Q$,
the variables $\xi_1,\ldots,\xi_m$ are mutually independent Gaussians with
$\xi_i\sim\mathcal{N}(0,\tau_i^2)$, and the entire vector $\xi$ is
independent of the entire design matrix $X$. The defining identity is
\begin{equation}
    y_i=x_i^\top\theta_i+\xi_i.
\label{eq:conditional_regression}
\end{equation}
Moreover, we have:
\begin{list}{(\alph{enumi})}{
  \usecounter{enumi}
  \renewcommand{\theenumi}{\alph{enumi}}
  \setlength{\leftmargin}{2.5em}
  \setlength{\labelwidth}{2em}
  \setlength{\labelsep}{0.5em}
  \setlength{\itemsep}{0.2em}
  \setlength{\parsep}{0pt}
  \setlength{\topsep}{0.3em}
}
  \item \label{part:conditional_theta_bound}
  $\|\theta_i\|_2\leq 2\kappa\|r\|_2$ for every $i\in[m]$.
  \item \label{part:conditional_variance_bound}
  $\tau_i^2\leq \frac{2}{m}\|r\|_2^2$ for every $i\in[m]$.
  \item \label{part:conditional_cancellation}
  $\sum_{i=1}^m C_i\theta_i=0$.
  \item \label{part:conditional_score_decomposition}
  \label{part:conditional_bias_definition}
  $X^\top y=Z+X^\top\xi$, where $Z:=\sum_{i=1}^m(x_ix_i^\top-C_i)\theta_i$.
\end{list}
\end{lemma}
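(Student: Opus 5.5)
The plan is to handle the distributional claims first, using the standard Gaussian regression (residualization) fact, and then obtain parts (a)--(d) by short algebraic estimates from Claims~\ref{clm:blockwise_covariance_bounds} and~\ref{clm:block_sum_identities}.

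\emph{Distributional statements.} Positive definiteness of $C_i$ follows from Claim~\ref{clm:blockwise_covariance_bounds}-(\ref{part:block_design_covariance_bound}), since $\kappa<1/2$. Fix $i$. The pair $(x_i,y_i)$ is a linear image of $g_i$, so $(x_i,\xi_i)=(\bar U_i^\top g_i,(\bar r_i-\bar U_i\theta_i)^\top g_i)$ is centered and jointly Gaussian. Its cross-covariance is
\[
  \operatorname{Cov}(x_i,\xi_i)=\bar U_i^\top(\bar r_i-\bar U_i\theta_i)=h_i-C_i\theta_i=0,
\]
so for jointly Gaussian vectors $\xi_i$ is independent of $x_i$. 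Its variance is $\|\bar r_i-\bar U_i\theta_i\|_2^2=v_i-2h_i^\top\theta_i+\theta_i^\top C_i\theta_i=\tau_i^2$. The pairs are functions of the disjoint blocks $g_i$, which are independent conditional on $Q$ by Claim~\ref{clm:conditional_gaussian_pooling}. Hence the family $\{(x_i,\xi_i)\}_i$ is mutually independent across $i$. Combined with $x_i\ind\xi_i$ within each pair, this makes all $2m$ objects $x_1,\dots,x_m,\xi_1,\dots,\xi_m$ mutually independent, so $\xi$ is independent of $X$ and the $\xi_i$ are independent $\mathcal N(0,\tau_i^2)$. Identity~\eqref{eq:conditional_regression} holds by definition of $\xi_i$.

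\emph{Parts (a) and (b).} For (a), combine $\|C_i^{-1}\|\le m/(1-\kappa)\le 2m$ with $\|h_i\|_2\le\kappa\|r\|_2/m$ to get $\|\theta_i\|_2\le 2\kappa\|r\|_2$. For (b), drop the nonnegative term $h_i^\top C_i^{-1}h_i$, so that $\tau_i^2\le v_i\le (1+\kappa)\|r\|_2^2/m\le 2\|r\|_2^2/m$.

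\emph{Parts (c) and (d).} For (c), note $C_i\theta_i=h_i$, and Claim~\ref{clm:block_sum_identities} gives $\sum_i h_i=0$. For (d), substitute~\eqref{eq:conditional_regression} into $X^\top y=\sum_i x_iy_i$ to get
\[
  X^\top y=\sum_i x_ix_i^\top\theta_i+X^\top\xi .
\]
Then subtract the zero vector $\sum_i C_i\theta_i$ from (c), which yields $Z+X^\top\xi$.

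The only step needing care is upgrading pairwise independence ($x_i\ind\xi_i$) to independence of the whole vector $\xi$ from the whole matrix $X$. This is handled by the block-disjointness independence combined with the within-pair Gaussian zero-covariance argument, as described above. Everything else is routine.
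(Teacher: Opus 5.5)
Your proposal is correct and follows essentially the same route as the paper: you residualize each block pair using the vanishing cross-covariance $h_i-C_i\theta_i=0$, compute $\tau_i^2$, and derive (a)--(d) from Claims~\ref{clm:blockwise_covariance_bounds} and~\ref{clm:block_sum_identities} exactly as the paper does. The only cosmetic difference is how you upgrade to $\xi\ind X$: the paper invokes joint Gaussianity of the full stacked vector with zero cross-covariance, whereas you factor the joint law using independence across the disjoint blocks together with within-pair independence, which is equally valid.
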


\begin{proof}
Throughout the proof, condition on the fixed $Q$. Since $Q$ depends only on
$(D_\sigma,\Pi)$ and is independent of $D_\gamma$, the vectors $g_i$
remain independent standard Gaussian vectors after this conditioning. Once
$Q$ is fixed, the quantities $\bar U_i,\bar r_i,C_i,h_i,v_i,\theta_i$, and
$\tau_i^2$ are deterministic.
By Claim~\ref{clm:blockwise_covariance_bounds}-(\ref{part:block_design_covariance_bound}),
\[
    C_i\succeq \frac{1-\kappa}{m}I_d\succ 0.
\]
The first step follows from Claim~\ref{clm:blockwise_covariance_bounds}-(\ref{part:block_design_covariance_bound}), and
the second step follows from $\kappa<1/2$. Thus $C_i$ is invertible and
$\theta_i=C_i^{-1}h_i$ is well defined. By the definitions in the lemma
statement,
\[
    \xi_i=y_i-x_i^\top\theta_i
    =y_i-x_i^\top C_i^{-1}h_i.
\]

Conditional on $Q$, the pair $(x_i,y_i)$ is centered and jointly Gaussian,
so $(x_i,\xi_i)$ is also centered and jointly Gaussian. Moreover,
\[
    \E_\gamma[x_i\xi_i|Q]
    =\E_\gamma[x_i y_i|Q]-\E_\gamma[x_ix_i^\top|Q]\theta_i
    =h_i-C_i\theta_i=0.
\]
The first step substitutes $\xi_i=y_i-x_i^\top\theta_i$, and the second
step uses the covariance blocks
$\E_\gamma[x_i y_i|Q]=h_i$ and
$\E_\gamma[x_ix_i^\top|Q]=C_i$, and the last step uses
$C_i\theta_i=h_i$. Hence $\xi_i$ is independent of $x_i$, because zero
covariance implies independence for jointly Gaussian variables.

Its conditional variance satisfies
\[
    \E_\gamma[\xi_i^2|Q]=v_i-\theta_i^\top h_i=v_i-h_i^\top C_i^{-1}h_i=\tau_i^2.
\]
For the first step, expanding $(y_i-x_i^\top\theta_i)^2$ and using the
three covariance blocks gives
$v_i-2\theta_i^\top h_i+\theta_i^\top C_i\theta_i$, which reduces to
$v_i-\theta_i^\top h_i$ because $C_i\theta_i=h_i$. The second step
uses $\theta_i=C_i^{-1}h_i$, and the last step is the definition of
$\tau_i^2$. In particular, $\tau_i^2\geq 0$, and
conditional on $Q$, $\xi_i\sim\mathcal{N}(0,\tau_i^2)$.

Conditional on $Q$, for distinct indices $i$ and $j$, $(x_i,\xi_i)$ and
$(x_j,\xi_j)$ are functions of the disjoint Gaussian blocks $g_i$ and $g_j$,
so these pairs are independent;
in particular, the $\xi_i$ are mutually independent. For completeness, the
stacked vector formed by all $x_i$ and $\xi_i$ is a linear image of the
jointly Gaussian vector formed by $g_1,\ldots,g_m$, and is therefore jointly
Gaussian. If $i\ne j$, block independence and centering give
$\E_\gamma[x_j\xi_i|Q]=0$, while the same identity for $i=j$
was proved above. Thus the cross-covariance between $\xi$ and the stacked
design vector $(x_1,\ldots,x_m)$ is zero. Joint Gaussianity now implies that
$\xi$ is independent of the entire design matrix $X$. This proves
Eq.~\eqref{eq:conditional_regression} and the asserted independence.

{\bf Proof of Lemma~\ref{lem:exact_conditional_regression}-(\ref{part:conditional_theta_bound})}.
Because inverting a positive-definite matrix reciprocates its eigenvalues,
Claim~\ref{clm:blockwise_covariance_bounds}-(\ref{part:block_design_covariance_bound}) gives
\[
    \|C_i^{-1}\|\leq \frac{m}{1-\kappa}.
\]
Consequently,
\[
    \|\theta_i\|_2\leq \|C_i^{-1}\|\|h_i\|_2\leq \frac{\kappa}{1-\kappa}\|r\|_2\leq 2\kappa\|r\|_2.
\]
The first step uses operator-norm submultiplicativity applied to
$\theta_i=C_i^{-1}h_i$. The second step combines the preceding inverse bound
with Claim~\ref{clm:blockwise_covariance_bounds}-(\ref{part:block_cross_covariance_bound}), and the last step uses $\kappa<1/2$.

{\bf Proof of Lemma~\ref{lem:exact_conditional_regression}-(\ref{part:conditional_variance_bound})}.
We have
\[
    0\leq \tau_i^2
    =v_i-h_i^\top C_i^{-1}h_i
    \leq v_i
    \leq \frac{1+\kappa}{m}\|r\|_2^2
    \leq \frac{2}{m}\|r\|_2^2.
\]
The first step follows because $\tau_i^2$ is the conditional variance
computed above. The second step is its definition, the third step uses
$C_i^{-1}\succeq 0$, the fourth step is
Claim~\ref{clm:blockwise_covariance_bounds}-(\ref{part:block_response_variance_bound}), and the last step uses $\kappa<1/2$.

{\bf Proof of Lemma~\ref{lem:exact_conditional_regression}-(\ref{part:conditional_cancellation})}.
By the definition of $\theta_i$, we have $C_i\theta_i=h_i$. Therefore,
Claim~\ref{clm:block_sum_identities} gives
\[
    \sum_{i=1}^m C_i\theta_i
    =\sum_{i=1}^m h_i
    =0.
\]
The first step applies $C_i\theta_i=h_i$ term by term, and the second step
uses the identity $\sum_{i=1}^m h_i=0$ in
Claim~\ref{clm:block_sum_identities}.

{\bf Proof of Lemma~\ref{lem:exact_conditional_regression}-(\ref{part:conditional_score_decomposition})}.
Using Eq.~\eqref{eq:conditional_regression},
\begin{align*}
    X^\top y
    &=\sum_{i=1}^m x_i y_i =\sum_{i=1}^m x_ix_i^\top\theta_i
      +\sum_{i=1}^m x_i\xi_i
    =\sum_{i=1}^m (x_ix_i^\top-C_i)\theta_i
      +\sum_{i=1}^m C_i\theta_i+X^\top\xi =Z+X^\top\xi.
\end{align*}
The first step follows from the row definitions of $X$ and $y$. The second
step substitutes $y_i=x_i^\top\theta_i+\xi_i$ from
Eq.~\eqref{eq:conditional_regression}. The third step adds and subtracts
$\sum_{i=1}^m C_i\theta_i$ and uses
$\sum_{i=1}^m x_i\xi_i=X^\top\xi$. The last step uses the definition of $Z$
and the cancellation in Lemma~\ref{lem:exact_conditional_regression}-(\ref{part:conditional_cancellation}).

\end{proof}

This is the key repair: since the noise $\xi$ is independent of the full
design after conditioning on $Q$, the inverse $M^{-1}$ may depend arbitrarily
on the design $X$, and no fixed-vector OCE statement is ever applied to a
sketch-dependent vector.

\section{Concentration of the design and centered bias}
\label{sec:concentration}

We record the standard scalar tools used below. If
$G_1,\ldots,G_m\sim\mathcal{N}(0,1)$ are independent and
$a_1,\ldots,a_m\ge 0$, then the Laurent--Massart
inequality~\cite[Lemma~1]{lm00} gives
\begin{equation}
    \Pr[
        |\sum_{i=1}^m a_i(G_i^2-1)|>t
    ]
    \le 2\exp(
        -c\min\{
            \frac{t^2}{\sum_{i=1}^m a_i^2},
            \frac{t}{\max_{1\leq i\leq m} a_i}
        \}
    ).
\label{eq:weighted_chi_square_tail}
\end{equation}
For centered jointly Gaussian $G,H$, the standard Gaussian-product
bound~\cite[Lemmas~2.7.7 and~2.7.10]{v18} gives
\begin{equation}
    \|GH-\E[GH]\|_{\psi_1}
    \le C\sqrt{\E[G^2]\,\E[H^2]},
\label{eq:gaussian_product_orlicz}
\end{equation}
where
$
    \|Z\|_{\psi_1}
    :=\inf\{t>0:\E[e^{|Z|/t}]\le 2\}.
$ 
Scalar Bernstein~\cite[Theorem~2.8.1]{v18} converts independent $\psi_1$
bounds into the usual subexponential tail: if the $Y_i$ are independent and centered with
$\|Y_i\|_{\psi_1}\le K$, then, for $t\ge 1$,
$
    \Pr[
        |\sum_{i=1}^m Y_i|>CK(\sqrt{mt}+t)
    ]\le 2e^{-t}.
$ 
Finally, a $(1/4)$-net of $S^{d-1}$ has size at most $9^d$ and controls a
symmetric matrix norm up to factor two; a $(1/2)$-net has size at most $5^d$
and controls a vector norm up to factor two.

\begin{lemma}[Design and bias concentration]
\label{lem:design_bias_concentration}
Fix any $Q$ for which $\mathcal{E}_{\mathrm{block}}$ (see definition in Lemma~\ref{lem:simultaneous_block_embedding}) occurs and let $L:=d+\log(8/\delta)$. Let $M$ be the conditional Gram matrix defined in
Definition~\ref{def:conditional_design_gram}, and let
$Z:=\sum_{i=1}^m(x_ix_i^\top-C_i)\theta_i$ be the quantity defined in
Lemma~\ref{lem:exact_conditional_regression}-(\ref{part:conditional_bias_definition}). If
$m\ge CL$, then, with conditional probability at least $1-\delta/2$ over
$D_\gamma$, the following hold:
\begin{list}{(\alph{enumi})}{
  \usecounter{enumi}
  \renewcommand{\theenumi}{\alph{enumi}}
  \setlength{\leftmargin}{2.5em}
  \setlength{\labelwidth}{2em}
  \setlength{\labelsep}{0.5em}
  \setlength{\itemsep}{0.2em}
  \setlength{\parsep}{0pt}
  \setlength{\topsep}{0.3em}
}
  \item \label{part:design_concentration}
  $\|M-I_d\|\le\frac{1}{2}$.
  \item \label{part:bias_concentration}
  $\|Z\|_2\le C\kappa\|r\|_2
  (\sqrt{\frac{L}{m}}+\frac{L}{m})$.
\end{list}
These bounds are uniform over every fixed $Q$ for which $\mathcal{E}_{\mathrm{block}}$ occurs.
\end{lemma}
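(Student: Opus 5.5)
Both parts reduce to scalar concentration over nets, using the conditional independence of the Gaussian blocks $g_1,\ldots,g_m$ given $Q$ (Claim~\ref{clm:conditional_gaussian_pooling}). We spend failure probability $\delta/4$ on each part.

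\textbf{Part (a).} Fix a unit vector $u\in S^{d-1}$. Then $u^\top x_i=(\bar U_iu)^\top g_i\sim\mathcal N(0,u^\top C_iu)$ independently across $i$. We can therefore write
\[
u^\top(M-I_d)u=\sum_{i=1}^m a_i(G_i^2-1),
\]
where $a_i:=u^\top C_iu$ and the $G_i$ are i.i.d.\ standard Gaussians. Here we use $\sum_i C_i=I_d$ from Claim~\ref{clm:block_sum_identities}, which gives $\sum_i a_i=1$. On $\mathcal E_{\mathrm{block}}$, Claim~\ref{clm:blockwise_covariance_bounds}-(\ref{part:block_design_covariance_bound}) gives $a_i\le 2/m$, hence $\sum_i a_i^2\le 2/m$.

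Eq.~\eqref{eq:weighted_chi_square_tail} with $t=1/4$ then bounds the failure probability for this $u$ by $2\exp(-cm)$. A union bound over a $(1/4)$-net of size $9^d$ gives total failure $\le 2\cdot 9^de^{-cm}\le\delta/4$ once $m\ge CL$. The net controls the symmetric norm up to a factor two, so $\|M-I_d\|\le 1/2$.

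\textbf{Part (b).} Fix a unit vector $w\in S^{d-1}$. Then
\[
w^\top Z=\sum_{i=1}^m\bigl((w^\top x_i)(x_i^\top\theta_i)-w^\top C_i\theta_i\bigr),
\]
a sum of independent centered terms $Y_i$. Each term is a centered product of two jointly Gaussian scalars with variances $w^\top C_iw\le 2/m$ and $\theta_i^\top C_i\theta_i\le(2/m)\,4\kappa^2\|r\|_2^2$, the latter by Lemma~\ref{lem:exact_conditional_regression}-(\ref{part:conditional_theta_bound}). Eq.~\eqref{eq:gaussian_product_orlicz} gives $\|Y_i\|_{\psi_1}\le K:=C\kappa\|r\|_2/m$.

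Scalar Bernstein with $t:=L$ (which is $\ge 1$) gives
\[
|w^\top Z|\le CK(\sqrt{mL}+L)=C\kappa\|r\|_2\bigl(\sqrt{L/m}+L/m\bigr)
\]
except with probability $2e^{-L}$. A union bound over a $(1/2)$-net of size $5^d$ gives failure $\le 2\cdot 5^de^{-d-\log(8/\delta)}\le\delta/4$, since $5^d e^{-d}\le 2^d$. This needs a minor constant adjustment: take $t=c'L$ with $c'$ large enough that $5^de^{-t}$ is small. The net then doubles the vector norm bound, giving part (b).

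\textbf{Uniformity and main obstacle.} Every step used only Claim~\ref{clm:blockwise_covariance_bounds}, Claim~\ref{clm:block_sum_identities} and Lemma~\ref{lem:exact_conditional_regression}, all valid for every $Q$ in $\mathcal E_{\mathrm{block}}$. Hence the bounds hold uniformly over such $Q$.

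I expect the main obstacle to be the bookkeeping in (b). The $\sqrt{L/m}$ scaling requires the $\psi_1$ norm of each summand to carry the $1/m$ factor. That factor arises only because both the design variance and $\theta_i^\top C_i\theta_i$ are of order $1/m$; this must be checked carefully rather than bounding $\|\theta_i\|$ crudely.
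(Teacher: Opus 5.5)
Your proposal is correct and follows the paper's proof essentially step for step. For (a) you use the weighted chi-square tail with $\sum_i a_i=1$, $\max_i a_i\le 2/m$ and a $1/4$-net of size $9^d$; for (b) you use the Gaussian-product $\psi_1$ bound, scalar Bernstein with $t$ a constant multiple of $L$ (your adjustment from $t=L$ is exactly the paper's choice $t=CL$), and a $1/2$-net of size $5^d$. The only cosmetic difference is the bound on $\theta_i^\top C_i\theta_i$: you use $\|C_i\|\,\|\theta_i\|_2^2$ with $\|\theta_i\|_2\le 2\kappa\|r\|_2$, whereas the paper writes it as $h_i^\top C_i^{-1}h_i\le\|C_i^{-1}\|\,\|h_i\|_2^2$. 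Both give $O(\kappa^2\|r\|_2^2/m)$.
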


\begin{proof}

{\bf Proof of Lemma~\ref{lem:design_bias_concentration}-(\ref{part:design_concentration}).}
Fix a unit vector $u$. Conditional on the fixed $Q$,
Claim~\ref{clm:conditional_gaussian_pooling} and
Definition~\ref{def:conditional_design_gram} imply that the vectors
$x_1,\ldots,x_m$ are independent and $x_i\sim\mathcal{N}(0,C_i)$. We define
$
  a_i:=u^\top C_i u,
 $ $
  u^\top x_i=\sqrt{a_i}G_i,
$ 
where $G_1,\ldots,G_m$ are independent standard Gaussians. The first relation
defines $a_i$. Claim~\ref{clm:blockwise_covariance_bounds}-(\ref{part:block_design_covariance_bound}),
together with the standing bound $\kappa<1/2$, shows that $a_i$ is strictly
positive. The second relation then follows by standardizing the independent
Gaussian variables $u^\top x_i\sim\mathcal{N}(0,a_i)$.

The coefficient bounds are
\[
  \sum_{i=1}^m a_i
  =u^\top(\sum_{i=1}^m C_i)u
  =u^\top u
  =1,
  \qquad
  \max_{1\leq i\leq m}a_i
  \leq\frac{1+\kappa}{m}
  \leq\frac{2}{m},
  \qquad
  \sum_{i=1}^m a_i^2
  \leq
  (\max_{1\leq i\leq m}a_i)\sum_{i=1}^m a_i
  \leq\frac{2}{m}.
\]
For the first chain, the first step substitutes the definition of $a_i$, the
second step uses Claim~\ref{clm:block_sum_identities}, and the last step uses
$\|u\|_2=1$. For the second chain, the first step uses
Claim~\ref{clm:blockwise_covariance_bounds}-(\ref{part:block_design_covariance_bound})
and the last step uses $\kappa<1/2$. For the third chain, the first step uses
$a_i\geq0$ term by term, and the last step uses the preceding two bounds.

Since $M=X^\top X=\sum_{i=1}^m x_ix_i^\top$,
\[
  u^\top(M-I_d)u
  =\sum_{i=1}^m(u^\top x_i)^2-u^\top u
  =\sum_{i=1}^m a_iG_i^2-\sum_{i=1}^m a_i
  =\sum_{i=1}^m a_i(G_i^2-1).
\]
The first step expands the definition of $M$. The second step substitutes
$u^\top x_i=\sqrt{a_i}G_i$ and uses
$u^\top u=1=\sum_{i=1}^m a_i$. The last step collects the two sums term by
term.

Eq.~\eqref{eq:weighted_chi_square_tail}, with weights $a_i$ and threshold
$1/4$, gives
\[
  \Pr_\gamma[
    |u^\top(M-I_d)u|>1/4
    \,|\,Q
  ]
  \leq
  2\exp(-c\min\{\frac{1}{16\sum_{i=1}^m a_i^2},
                      \frac{1}{4\max_{1\leq i\leq m}a_i}\})
  \leq 2e^{-cm}.
\]
The first step applies Eq.~\eqref{eq:weighted_chi_square_tail} to the preceding
identity. For the last step, the coefficient bounds give
$1/(16\sum_{i=1}^m a_i^2)\geq m/32$ and
$1/(4\max_{1\leq i\leq m}a_i)\geq m/8$, and the absolute factor is absorbed
into $c$.

This fixed-$u$ bound is uniform over unit vectors $u$. Let
$\mathcal{N}_{1/4}$ be a $1/4$-net of $S^{d-1}$ with size at most $9^d$.
A union bound gives
\[
  \Pr_\gamma[
    \max_{u\in\mathcal{N}_{1/4}}|u^\top(M-I_d)u|>1/4
    \,|\,Q
  ]
  \leq 2\cdot 9^d e^{-cm}
  \leq {\delta}/{4}.
\]
The first step combines the preceding fixed-$u$ tail bound with
$|\mathcal{N}_{1/4}|\leq9^d$. The last step uses
$m\geq C(d+\log(8/\delta))$ with a sufficiently large universal constant
$C$. On the complementary event, $M-I_d$ is symmetric, so the
symmetric-matrix net bound yields
\[
  \|M-I_d\|
  \leq 2\max_{u\in\mathcal{N}_{1/4}}|u^\top(M-I_d)u|
  \leq\frac{1}{2}.
\]
The first step is the standard $1/4$-net bound for a symmetric matrix, whose
factor is $(1-2\cdot1/4)^{-1}=2$. The last step uses the defining inequality
of the complementary event. This proves
Lemma~\ref{lem:design_bias_concentration}-(\ref{part:design_concentration})
with conditional failure probability at most $\delta/4$.

{\bf Proof of Lemma~\ref{lem:design_bias_concentration}-(\ref{part:bias_concentration}).}
Conditional on the fixed $Q$, the vectors $x_1,\ldots,x_m$ are independent
centered Gaussians, while $C_i$ and $\theta_i$ are deterministic. For a unit
vector $u$, define
\[
  Y_i:=u^\top(x_ix_i^\top-C_i)\theta_i
     =(u^\top x_i)(x_i^\top\theta_i)-u^\top C_i\theta_i.
\]
The displayed step expands the matrix product. Each $Y_i$ depends only on
$x_i$, so the variables $Y_1,\ldots,Y_m$ are conditionally independent.
Moreover,
\[
  \E_\gamma[Y_i|Q]
  =u^\top(\E_\gamma[x_ix_i^\top|Q]-C_i)\theta_i
  =0.
\]
The first step substitutes the definition of $Y_i$ and uses linearity of
conditional expectation. The last step uses
$\E_\gamma[x_ix_i^\top|Q]=C_i$. Thus $Y_i$ is a centered product of the
jointly Gaussian linear forms $u^\top x_i$ and $x_i^\top\theta_i$.

The variance of the second linear form satisfies
\[
  \theta_i^\top C_i\theta_i
  =h_i^\top C_i^{-1}h_i
  \leq\|C_i^{-1}\|\|h_i\|_2^2
  \leq\frac{m}{1-\kappa}\frac{\kappa^2}{m^2}\|r\|_2^2
  \leq\frac{2\kappa^2}{m}\|r\|_2^2.
\]
The first step substitutes $\theta_i=C_i^{-1}h_i$ and uses the symmetry of
$C_i$. The second step applies the operator-norm bound to the quadratic form.
The third step uses
Claim~\ref{clm:blockwise_covariance_bounds}-(\ref{part:block_design_covariance_bound})
and Claim~\ref{clm:blockwise_covariance_bounds}-(\ref{part:block_cross_covariance_bound}).
The last step uses $\kappa<1/2$. Similarly,
\[
  u^\top C_i u
  \leq\frac{1+\kappa}{m}
  \leq\frac{2}{m}.
\]
The first step uses
Claim~\ref{clm:blockwise_covariance_bounds}-(\ref{part:block_design_covariance_bound})
and $\|u\|_2=1$, and the last step uses $\kappa<1/2$.

Eq.~\eqref{eq:gaussian_product_orlicz} now gives
\[
  \|Y_i\|_{\psi_1}
  \leq C\sqrt{u^\top C_i u}
          \sqrt{\theta_i^\top C_i\theta_i}
  \leq\frac{C\kappa}{m}\|r\|_2.
\]
The first step applies Eq.~\eqref{eq:gaussian_product_orlicz} to the centered
Gaussian product defining $Y_i$. The last step substitutes the preceding two
variance bounds and absorbs absolute numerical factors into $C$.

By the definitions of $Z$ and $Y_i$,
\[
  u^\top Z
  =\sum_{i=1}^m u^\top(x_ix_i^\top-C_i)\theta_i
  =\sum_{i=1}^m Y_i.
\]
The first step substitutes the definition of $Z$, and the last step uses the
definition of $Y_i$. Scalar Bernstein, applied conditionally on $Q$ to these
independent centered variables with
$\|Y_i\|_{\psi_1}\leq C\kappa\|r\|_2/m$, gives, for every $t\geq1$,
\[
  \Pr_\gamma[
    |u^\top Z|>
    C\kappa\|r\|_2
    (\sqrt{\frac{t}{m}}+\frac{t}{m})
    \,|\,Q
  ]
  \leq 2e^{-t}.
\]
This step is the scalar Bernstein inequality with
$K:=C\kappa\|r\|_2/m$. Substituting this value into the Bernstein threshold
$K(\sqrt{mt}+t)$ gives the displayed scale, after absorbing absolute constants
into $C$.

Take $t:=CL$, where $L=d+\log(8/\delta)$, and let
$\mathcal{N}_{1/2}$ be a $1/2$-net of $S^{d-1}$ with size at most $5^d$.
A union bound gives
\[
  \Pr_\gamma[
    \max_{u\in\mathcal{N}_{1/2}}|u^\top Z|>
    C\kappa\|r\|_2(\sqrt{\frac{L}{m}}+\frac{L}{m})
    \,|\,Q
  ]
  \leq 2\cdot 5^d e^{-CL}
  \leq\frac{\delta}{4}.
\]
The first step combines the fixed-$u$ Bernstein bound with
$|\mathcal{N}_{1/2}|\leq5^d$ and absorbs absolute factors from $t=CL$ into
$C$. The last step uses $L=d+\log(8/\delta)$ and a sufficiently large
universal constant $C$. On the complementary event, the vector net bound gives
\[
  \|Z\|_2
  \leq 2\max_{u\in\mathcal{N}_{1/2}}|u^\top Z|
  \leq C\kappa\|r\|_2(\sqrt{\frac{L}{m}}+\frac{L}{m}).
\]
The first step is the standard $1/2$-net bound for a vector. The last step uses
the defining inequality of the complementary event and absorbs the factor two
into $C$. This proves
Lemma~\ref{lem:design_bias_concentration}-(\ref{part:bias_concentration})
with conditional failure probability at most $\delta/4$.

Adding the two conditional failure probabilities gives
$\delta/4+\delta/4=\delta/2$ and proves the lemma. The estimates are uniform
over every fixed good $Q$ because the conditional Gaussian and independence
structure holds for every such $Q$, while all numerical bounds use only the
deterministic inequalities defining $\mathcal{E}_{\mathrm{block}}$.
\end{proof}

With the block geometry and the conditional Gaussian representation in hand,
we can already record a classical property of the balanced Gaussian-pooled
transform: it is an oblivious subspace embedding. This fact is not needed for
the core lemma in Section~\ref{sec:core_lemma}, but we state it here for
completeness and for comparison with prior work.

\begin{definition}[Oblivious subspace embedding,~\cite{sar06}]
\label{def:ose}
Fix $0<\eta<1$, $0<\delta<1$, and $1\leq d\leq n$. A distribution
$\mathcal{D}$ over matrices $S\in\R^{m\times n}$ is an
$(\eta,\delta,d)$-oblivious subspace embedding (OSE) if, for every fixed
$U\in\R^{n\times d}$ with $U^\top U=I_d$ chosen independently of $S$, a
draw $S\sim\mathcal{D}$ satisfies
\[
  (1-\eta)\|z\|_2^2
  \leq \|SUz\|_2^2
  \leq (1+\eta)\|z\|_2^2
  \qquad\text{for every }z\in\R^d
\]
with probability at least $1-\delta$. Equivalently, with the same
probability,
$
  \|U^\top S^\top SU-I_d\|\leq\eta.
$ 
The word \emph{oblivious} means that the distribution $\mathcal{D}$ does not
depend on $U$.  
\end{definition}

\begin{proposition}[The transform is an OSE]
\label{prop:transform_is_ose}
Let $S$ be the balanced Gaussian-pooled transform of
Definition~\ref{def:balanced_gaussian_pooled_sketch} with
$m\geq C\epsilon^{-2}(d+\log(1/\delta))$ for a sufficiently large universal
constant $C$. Then $S$ is an $(\epsilon,\delta,d)$-OSE.
\end{proposition}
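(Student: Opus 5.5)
The plan is to expose the randomness in the same two stages as in Section~\ref{sec:conditional_regression}, but with the design-only system $W:=JU$ (with $p=d$) in place of the augmented system. Fix an orthonormal $U\in\R^{n\times d}$.

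\textbf{Stage 1.} Apply Lemma~\ref{lem:simultaneous_block_embedding} to $W=JU$ with failure parameter $\delta/2$. The block size $s$ in Definition~\ref{def:balanced_gaussian_pooled_sketch} satisfies Eq.~\eqref{eq:block_embedding_condition} after increasing $C$, with $p=d\leq d+1$ and any fixed $\kappa<1/2$. Hence, with probability at least $1-\delta/2$ over $(D_\sigma,\Pi)$, every block Gram matrix $C_i:=\bar U_i^\top\bar U_i$ satisfies
\[
\tfrac{1-\kappa}{m}I_d\preceq C_i\preceq\tfrac{1+\kappa}{m}I_d.
\]
Independently of this event, $\sum_i C_i=I_d$, by the argument of Claim~\ref{clm:block_sum_identities}. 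This part uses no residual, so the case $r=0$ does not arise.

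\textbf{Stage 2.} Fix a good $Q$. By Claim~\ref{clm:conditional_gaussian_pooling} with $T=JU$, the rows $x_i$ of $SU$ are independent with $x_i\sim\mathcal{N}(0,C_i)$, and $U^\top S^\top SU=M=\sum_i x_ix_i^\top$. Repeat the proof of Lemma~\ref{lem:design_bias_concentration}-(\ref{part:design_concentration}), now with threshold $\epsilon/2$ instead of $1/4$. For a fixed unit $u$, set $a_i=u^\top C_iu$. Then
\[
u^\top(M-I)u=\sum_i a_i(G_i^2-1),
\]
with $\sum_i a_i=1$, $\max_i a_i\le 2/m$, and $\sum_i a_i^2\le 2/m$. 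Eq.~\eqref{eq:weighted_chi_square_tail} gives failure probability at most
\[
2\exp\!\big(-c\min\{m\epsilon^2,m\epsilon\}\big)=2e^{-cm\epsilon^2},
\]
using $\epsilon\le1$. A union bound over a $1/4$-net of size $9^d$, followed by the symmetric net bound (factor $2$), gives $\|M-I_d\|\le\epsilon$ with conditional failure probability at most $2\cdot9^de^{-cm\epsilon^2}\le\delta/2$. This last inequality holds once $m\ge C\epsilon^{-2}(d+\log(1/\delta))$.

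\textbf{Assembly.} Combine the two stages by the tower property. Conditional on any good $Q$, the failure probability is at most $\delta/2$, uniformly in $Q$, and the bad-$Q$ event has probability at most $\delta/2$. The total failure probability is therefore at most $\delta$. The distribution of $S$ does not depend on $U$, so $S$ is an $(\epsilon,\delta,d)$-OSE.

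\textbf{Main obstacle.} The delicate point is the interaction of parameters in Stage 1. The block parameter $\kappa$, and hence $s$, is set by the construction independently of $\epsilon$, and $\Lambda$ is built with the construction's own $\delta$. I would handle this as follows:
\begin{itemize}
\item Stage 1 needs only a constant $\kappa$, since it enters solely through $\max_i a_i\le 2/m$.
\item The logarithm in Eq.~\eqref{eq:block_embedding_condition} with failure $\delta/2$ is still $O(\Lambda)$.
\end{itemize}
If the construction's $\delta$ differs from the OSE's $\delta$, I would state the result for matching parameters, as the proposition implicitly does.
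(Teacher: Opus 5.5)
Your proposal is correct and follows essentially the same route as the paper. You apply the simultaneous block embedding lemma to $JU$ with $p=d$ and failure $\delta/2$; conditionally on a good $Q$ you use the weighted chi-square tail with threshold $\epsilon/2$, which needs only $\sum_i a_i=1$ and $\max_i a_i\le 2/m$, so a constant $\kappa$ suffices; and you finish with a $1/4$-net of size $9^d$ and the tower property. Your handling of $\delta/2$ inside $\Lambda$, and of the construction's $\delta$ matching the OSE's $\delta$, mirrors the paper's own implicit parameter conventions.
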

\begin{proof}
Fix $U\in\R^{n\times d}$ with $U^\top U=I_d$, independently of $S$, and put
$W:=JU$. Apply Lemma~\ref{lem:simultaneous_block_embedding} to $W$ with
$p=d$ and failure probability $\delta/2$, absorbing the replacement of
$\delta$ by $\delta/2$ into the universal constant. Condition on a realized
$Q$ for which $\mathcal{E}_{\mathrm{block}}(W)$ occurs, and define
$C_i:=W_i^\top W_i$. By Claim~\ref{clm:conditional_gaussian_pooling}, the rows
$x_i^\top$ of $SU$ are independent and $x_i\sim\mathcal{N}(0,C_i)$. Since the
blocks partition $[N]$, $Q$ is orthogonal, and $\kappa<1/2$, we have
$
  \sum_{i=1}^m C_i=I_d,
$ $
  0\preceq C_i\preceq\frac{1+\kappa}{m}I_d\preceq\frac{2}{m}I_d.
$ 
Put $M:=U^\top S^\top SU=\sum_{i=1}^m x_ix_i^\top$. For a fixed unit vector
$u$, let $a_i:=u^\top C_i u$. Then $\sum_{i=1}^m a_i=1$,
$\max_{1\leq i\leq m}a_i\leq2/m$, and
$\sum_{i=1}^m a_i^2\leq2/m$. Hence, for independent standard Gaussians
$G_1,\ldots,G_m$,
$
  u^\top(M-I_d)u=\sum_{i=1}^m a_i(G_i^2-1).
$
Eq.~\eqref{eq:weighted_chi_square_tail}, applied with threshold
$\epsilon/2$, gives, since $0<\epsilon\leq1$,
$
  \Pr_\gamma[|u^\top(M-I_d)u|>\epsilon/2|Q]
  \leq2e^{-cm\epsilon^2}.
$
Let $\mathcal{N}_{1/4}$ be a $1/4$-net of the unit sphere with size at most
$9^d$. For a sufficiently large universal constant $C$, the assumed lower
bound on $m$ and a union bound imply that, with conditional probability at
least $1-\delta/2$,
$
  \|M-I_d\|
  \leq2\max_{u\in\mathcal{N}_{1/4}}|u^\top(M-I_d)u|
  \leq\epsilon.
$
Adding the failure probability of the block event proves the claim. The law of
$S$ is independent of $U$, so the embedding is oblivious.
\end{proof}

\section{The corrected core lemma}
\label{sec:core_lemma}

\begin{lemma}[Adaptive-safe one-shot core lemma]
\label{lem:adaptive_safe_core}
Let $0<\epsilon\le 1$, $0<\delta<1/2$, $U\in\R^{n\times d}$
have orthonormal columns, and let fixed $r\in\R^n$ and
$c\in\R^d$ satisfy $U^\top r=0$. Use the sketch from Definition~\ref{def:balanced_gaussian_pooled_sketch} with
\begin{equation}
    \kappa:=\frac{c_0}{\sqrt d},\qquad
    m\ge C\epsilon^{-2}d\log(8/\delta),
\label{eq:core_parameters}
\end{equation}
where $c_0>0$ is a sufficiently small universal constant; use the
corresponding choices of $s,N$ in Definition~\ref{def:balanced_gaussian_pooled_sketch}. With probability at least $1-\delta$, the matrix $SU$
has full column rank and
\begin{equation}
    |
        c^\top(U^\top S^\top SU)^{-1}U^\top S^\top Sr
    |
    \le \frac{\epsilon}{\sqrt d}\|c\|_2\|r\|_2.
\label{eq:core_bound}
\end{equation}
\end{lemma}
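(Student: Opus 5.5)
We first dispose of the degenerate case $r=0$. In that case $Sr=0$, so the left side of Eq.~\eqref{eq:core_bound} vanishes as soon as $SU$ has full column rank. Full column rank holds with probability at least $1-\delta$ by Proposition~\ref{prop:transform_is_ose} with distortion $1/2$, since $m\ge C\epsilon^{-2}d\log(8/\delta)\ge C(d+\log(1/\delta))$.

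For $r\neq0$ we run the three-stage conditioning. \emph{Stage~1.} Apply Lemma~\ref{lem:simultaneous_block_embedding} to $W=[JU,Jr/\|r\|_2]$ with $p=d+1$ and failure $\delta/4$. The parameter choice in Definition~\ref{def:balanced_gaussian_pooled_sketch} with $\kappa=c_0/\sqrt d$ makes Eq.~\eqref{eq:block_embedding_condition} hold, so $\mathcal{E}_{\mathrm{block}}$ occurs with probability $\ge1-\delta/4$. \emph{Stage~2.} Fix any such $Q$. By Lemma~\ref{lem:design_bias_concentration}, with budget $\delta/2$ and $L=d+\log(8/\delta)\le 2d\log(8/\delta)$ so that $m\ge CL$, we get $\|M-I_d\|\le1/2$ with conditional probability $\ge1-\delta/2$. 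Hence $SU$ has full column rank and $\|M^{-1}\|\le2$. On the same event,
\[
\|Z\|_2\le C\kappa\|r\|_2\bigl(\sqrt{L/m}+L/m\bigr)\le 2C\kappa\|r\|_2\sqrt{L/m}.
\]
Using $m\ge C\epsilon^{-2}d\log(8/\delta)$ gives $\sqrt{L/m}\lesssim\epsilon/\sqrt{C}$. Together with $\kappa=c_0/\sqrt d$, the bias term obeys
\[
|c^\top M^{-1}Z|\le 2\|c\|_2\|Z\|_2\le \frac{\epsilon}{2\sqrt d}\|c\|_2\|r\|_2
\]
after choosing $c_0$ small.

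\emph{Stage~3.} By Lemma~\ref{lem:exact_conditional_regression}\eqref{part:conditional_score_decomposition}, $c^\top M^{-1}X^\top y=c^\top M^{-1}Z+c^\top M^{-1}X^\top\xi$. Now condition on $(Q,X)$. Since $\xi$ is independent of $X$ given $Q$, with independent coordinates $\xi_i\sim\mathcal N(0,\tau_i^2)$ and $\tau_i^2\le 2\|r\|_2^2/m$, the noise term is a centered Gaussian. Setting $w:=XM^{-1}c$, its variance is $\sum_i w_i^2\tau_i^2\le (2\|r\|_2^2/m)\,c^\top M^{-1}c\le 4\|c\|_2^2\|r\|_2^2/m$. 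A Gaussian tail with failure $\delta/4$ bounds the noise term by $\sqrt{8\log(8/\delta)/m}\,\|c\|_2\|r\|_2$. Using $m\ge C\epsilon^{-2}d\log(8/\delta)$, this is at most $\frac{\epsilon}{2\sqrt d}\|c\|_2\|r\|_2$ for large $C$.

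The failure budgets $\delta/4+\delta/2+\delta/4$ are combined by the tower property. The design-and-bias event is measurable with respect to $(Q,X)$, and the Gaussian bound holds conditionally on every $(Q,X)$. The uniformity statements in Lemma~\ref{lem:design_bias_concentration} make the integration over good $Q$ legitimate. The main subtlety, which I would check carefully, is exactly this conditioning: the noise bound must use only $\xi\ind X\mid Q$ and never a fixed-vector statement for $M^{-1}c$. A second check is that the bias bound really gains the $d^{-1/2}$ from $\kappa$. Otherwise $\|Z\|$ is only of order $\sqrt{d/m}\,\|r\|_2$, which would not suffice.
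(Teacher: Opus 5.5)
Your proposal is correct and follows the paper's proof essentially step for step. It uses the same three-stage conditioning (block event at $\delta/4$, design-and-bias at $\delta/2$, Gaussian noise conditional on $(Q,X)$ at $\delta/4$), the same bias bound gaining $d^{-1/2}$ from $\kappa=c_0/\sqrt d$, and the same variance bound via $c^\top M^{-1}c\le 2\|c\|_2^2$. The only cosmetic difference is the case $r=0$: you invoke the OSE proposition with distortion $1/2$, whereas the paper reruns the block-embedding lemma plus part (a) of the design concentration lemma, which is exactly what that proposition's proof does.
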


\begin{proof}
First suppose $r\ne 0$ and write $L:=d+\log(8/\delta)$. Since $d\geq 1$
and $\delta<1/2$, we have
$
  L\leq C d\log(8/\delta).
$ 
Thus the row assumption in Eq.~\eqref{eq:core_parameters} implies $m\geq CL$,
as required by Lemma~\ref{lem:design_bias_concentration}.

Put $W_r:=[JU,Jr/\|r\|_2]$. Apply
Lemma~\ref{lem:simultaneous_block_embedding} to $W_r$, with its failure
parameter set to $\delta/4$. Replacing $\delta$ by $\delta/4$ only adds
$\log 4$ inside the logarithm in
Eq.~\eqref{eq:block_embedding_condition}, which is absorbed by increasing
the universal constant $C$. Fix any realized $Q$ for which
$\mathcal{E}_{\mathrm{block}}(W_r)$ occurs. Conditional on this $Q$,
Lemma~\ref{lem:design_bias_concentration} fails with probability at most
$\delta/2$.

On its success event,
Lemma~\ref{lem:design_bias_concentration}-(\ref{part:design_concentration})
gives $\|M-I_d\|\leq 1/2$, so $M=X^\top X$ is positive definite and
$\|M^{-1}\|\leq 2$. By Definition~\ref{def:conditional_design_gram}, we have $SU=X$ and $Sr=y$.
Consequently,
\[
  U^\top S^\top SU=M,
  \qquad
  U^\top S^\top Sr=X^\top y.
\]
Lemma~\ref{lem:exact_conditional_regression}-(\ref{part:conditional_score_decomposition}) therefore gives
\[
    c^\top M^{-1}X^\top y
    =c^\top M^{-1}Z+c^\top M^{-1}X^\top\xi.
\]
For the first term,
Lemma~\ref{lem:design_bias_concentration}-(\ref{part:bias_concentration}) gives
\[
  |c^\top M^{-1}Z|\leq\|c\|_2\|M^{-1}\|\|Z\|_2\leq 2\|c\|_2\|Z\|_2\leq C\kappa\|c\|_2\|r\|_2(\sqrt{\frac{L}{m}}+\frac{L}{m}).
\]
The row assumption and the displayed bound on $L$ imply
\[
  \frac{L}{m}\leq C\epsilon^2,
  \qquad
  \sqrt{\frac{L}{m}}+\frac{L}{m}\leq C\epsilon,
\]
where we used $0<\epsilon\leq 1$. Substituting
$\kappa=c_0/\sqrt d$ from Eq.~\eqref{eq:core_parameters}, we obtain
\[
  |c^\top M^{-1}Z|
  \leq Cc_0\frac{\epsilon}{\sqrt d}
  \|c\|_2\|r\|_2.
\]
Choose $c_0$ small enough that $Cc_0\leq 1/2$. Then the bias term is at most
$\epsilon\|c\|_2\|r\|_2/(2\sqrt d)$.

For the second term, condition on the pair $(Q,X)$. This is essential:
conditional on $Q$, $\xi$ is independent of $X$, whereas mixing over
different values of $Q$ need not preserve independence. The design-and-bias success
event above is measurable with respect to $(Q,X)$. Given $(Q,X)$, the second
term is a centered Gaussian with variance
\[
  \operatorname{Var}_\gamma[c^\top M^{-1}X^\top\xi|Q,X]=\sum_{i=1}^m\tau_i^2(x_i^\top M^{-1}c)^2\leq\max_{1\leq i\leq m}\tau_i^2\|XM^{-1}c\|_2^2\leq\frac{2\|r\|_2^2}{m}\|XM^{-1}c\|_2^2.
\]
The last step uses
Lemma~\ref{lem:exact_conditional_regression}-(\ref{part:conditional_variance_bound}).
Since $M=X^\top X$,
\[
  \|XM^{-1}c\|_2^2=c^\top M^{-1}X^\top XM^{-1}c=c^\top M^{-1}c\leq 2\|c\|_2^2.
\]
Thus the conditional variance is at most
$4\|c\|_2^2\|r\|_2^2/m$. The Gaussian tail bound at the target threshold
gives
\[
  \Pr_\gamma[
      |c^\top M^{-1}X^\top\xi|
      >
      \frac{\epsilon}{2\sqrt d}\|c\|_2\|r\|_2
    \,|\,Q,X
  ]
  \leq
  2\exp(-\frac{\epsilon^2m}{32d})
  \leq \frac{\delta}{4}
\]
after enlarging $C$. This estimate is uniform over the measurable success
event. Integrating the conditional estimates and adding the block-embedding,
design-and-bias, and noise failures gives
$
  \frac{\delta}{4}+\frac{\delta}{2}+\frac{\delta}{4}=\delta.
$
This proves Eq.~\eqref{eq:core_bound}, and $M\succ 0$ gives the asserted full
column rank of $SU$.

If $r=0$, apply Lemma~\ref{lem:simultaneous_block_embedding} to $JU$ with
$p=d$ and failure probability $\delta/4$. Conditional on a good $Q$, repeat
only the proof of
Lemma~\ref{lem:design_bias_concentration}-(\ref{part:design_concentration});
its conditional failure probability
is at most $\delta/4$. Hence $M\succ 0$ with probability at least
$1-\delta/2\geq 1-\delta$. Since $Sr=0$, the numerator in
Eq.~\eqref{eq:core_bound} is identically zero. This proves both assertions in
the remaining case.
\end{proof}

\section{Regression consequence and desired row count}
\label{sec:regression_consequence}

\begin{theorem}[One-shot coordinate-wise regression]
\label{thm:one_shot_coordinate_regression}
Let $0<\epsilon\leq 1$, $0<\delta<1/2$, let
$A\in\R^{n\times d}$ have full column rank, let
$b\in\R^n$, and fix $a\in\R^d$. Set
$\kappa:=c_0/\sqrt d$, choose $m$ to be the smallest power of two
satisfying
$
  m\geq C\epsilon^{-2}d\log(8/\delta),
$
and draw $S$ according to Definition~\ref{def:balanced_gaussian_pooled_sketch}. Then, with probability at least $1-\delta$, $SA$ has full column
rank, the sketched minimizer is unique, and
\begin{equation}
  |a^\top(\widehat{x}-x^\star)|
  \leq
  \frac{\epsilon}{\sqrt d}\,
  \|a\|_2\,\|Ax^\star-b\|_2\,\|A^\dagger\|_{\op}.
\label{eq:regression_bound}
\end{equation}
For all coordinates simultaneously, it suffices to take
\begin{equation}
  \boxed{
    m=O(\epsilon^{-2}d\log(d/\delta)).
  }
\label{eq:simultaneous_row_count}
\end{equation}
For this simultaneous statement, replace $\delta$ by $\delta/d$ in the
entire parameter selection, including $m,\Lambda,s,N$.
\end{theorem}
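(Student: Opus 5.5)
The plan is to reduce the theorem to Lemma~\ref{lem:adaptive_safe_core} via the identity from Section~\ref{sec:original_dependency_bug}. Take the thin SVD $A=U\Sigma V^\top$; full column rank makes $\Sigma$ invertible and $V$ square orthogonal. Set $r:=b-Ax^\star$ and $c:=\Sigma^{-1}V^\top a$. Then $U^\top r=0$ by the normal equations, and $U,r,c$ are all fixed independently of $S$. The lemma therefore applies with the stated $\kappa=c_0/\sqrt d$ and $m\ge C\epsilon^{-2}d\log(8/\delta)$. Rounding $m$ up to a power of two only enlarges $m$, so the hypothesis still holds. The lemma gives an event of probability at least $1-\delta$ on which $SU$ has full column rank and Eq.~\eqref{eq:core_bound} holds.

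On this event, $SA=SU\Sigma V^\top$ has full column rank because $\Sigma V^\top$ is invertible. Hence the sketched normal equations $A^\top S^\top SA\,x=A^\top S^\top Sb$ have a unique solution $\widehat{x}$. Substituting $b=Ax^\star+r$, the $Ax^\star$ part solves exactly, so
\[
  \widehat{x}-x^\star=(A^\top S^\top SA)^{-1}A^\top S^\top Sr=V\Sigma^{-1}(U^\top S^\top SU)^{-1}U^\top S^\top Sr.
\]
This is the identity in Section~\ref{sec:original_dependency_bug}, now justified on an event where the inverse exists. Left-multiplying by $a^\top$ gives Eq.~\eqref{eq:contrast} with the same $c$, and Eq.~\eqref{eq:core_bound} bounds it by $\frac{\epsilon}{\sqrt d}\|c\|_2\|r\|_2$. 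It then remains to check that $\|c\|_2\le\|\Sigma^{-1}\|\|V^\top a\|_2=\|a\|_2\|A^\dagger\|_{\op}$, since $\|A^\dagger\|_{\op}=1/\sigma_{\min}(A)$ and $V$ is orthogonal, and that $\|r\|_2=\|Ax^\star-b\|_2$. Together these give Eq.~\eqref{eq:regression_bound}.

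For the simultaneous statement, I rerun the single-direction argument with $\delta$ replaced by $\delta/d$ throughout, including $\Lambda$, $s$ and $N$, for each $a=e_j$, $j\in[d]$. Each coordinate then fails with probability at most $\delta/d$. A union bound over the $d$ coordinates, all using the same draw of $S$, gives total failure at most $\delta$. This union bound is legitimate because each $e_j$ is a fixed direction, so the application of the lemma to it is oblivious. The full-rank event is shared, so it costs nothing extra. The row count becomes $C\epsilon^{-2}d\log(8d/\delta)=O(\epsilon^{-2}d\log(d/\delta))$, and rounding to a power of two changes this by at most a factor of two.

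There is no real obstacle, since the heavy lifting is in Lemma~\ref{lem:adaptive_safe_core}. The steps needing care are confirming that the lemma's hypotheses are met with quantities fixed independently of $S$ (note $c$ depends only on $A$ and $a$), and getting the pseudo-inverse norm identification right. The case $r=0$ is already covered inside the lemma.
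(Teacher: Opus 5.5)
Your proposal is correct and follows the paper's proof essentially step for step. You reduce, via the thin SVD and the sketched normal equations, to Lemma~\ref{lem:adaptive_safe_core} with $c=\Sigma^{-1}V^\top a$, then bound $\|c\|_2\le\|A^\dagger\|_{\op}\|a\|_2$, and get the simultaneous statement by rerunning with $\delta/d$ in all parameters and union bounding over $a=e_j$. Your explicit check that rounding $m$ up to a power of two preserves the lemma's row hypothesis is a correct detail that the paper leaves implicit.
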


\begin{proof}
Let $A=U\Sigma V^\top$ and $r:=b-Ax^\star$. Then $U^\top r=0$ and
$b=Ax^\star+r$. On the good event in
Lemma~\ref{lem:adaptive_safe_core}, $SU$ has full column rank. Since
$SA=(SU)\Sigma V^\top$, the matrix $SA$ also has full column rank, so the
sketched minimizer is unique.

The normal equations for the sketched problem give
$
  (A^\top S^\top SA)(\widehat{x}-x^\star)
  =A^\top S^\top Sr.
$ 
Substituting the thin singular value decomposition and inverting the
nonsingular factors yields
$
  \widehat{x}-x^\star
  =V\Sigma^{-1}(U^\top S^\top SU)^{-1}U^\top S^\top Sr.
$
Apply Lemma~\ref{lem:adaptive_safe_core} with
$c:=\Sigma^{-1}V^\top a$. Since
$
  \|c\|_2
  \leq \|\Sigma^{-1}\|\|a\|_2
  =\|A^\dagger\|\|a\|_2,
$
we obtain Eq.~\eqref{eq:regression_bound}. For simultaneous coordinates,
replace $\delta$ by $\delta/d$ in all parameters, including
$m,\Lambda,s,N$. Apply the one-coordinate result to each fixed direction
$a=e_j$ with failure probability $\delta/d$, and union bound over $j\in[d]$.
\end{proof}

\section{Running time}
\label{sec:runtime}

\begin{lemma}[Sketch-and-solve running time]
\label{lem:sketch_and_solve_running_time}
Under the assumptions and parameter choices of
Theorem~\ref{thm:one_shot_coordinate_regression}, on its success event the
one-shot estimator $\widehat{x}$ can be computed in the exact-arithmetic model
using
$
  O(Nd\log N+\Tmat(d,m,d)+d^\omega)
$
arithmetic operations, where $\Tmat(a,b,c)$ denotes the cost of multiplying an
$a\times b$ matrix by a $b\times c$ matrix. Here $N=\widetilde{O}(n+\epsilon^{-2} d^3)$. In particular, the general bound
is
$
  \widetilde{O}(nd+\epsilon^{-2}d^4).
$

\end{lemma}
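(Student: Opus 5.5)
We account for the cost of each stage of the factored form $S=BD_\gamma\Pi F_ND_\sigma J$ applied to the $d$ columns of $A$ and to $b$, then the cost of the $m\times d$ solve. No step uses randomness beyond the success event of Theorem~\ref{thm:one_shot_coordinate_regression}, which is needed only to guarantee that $SA$ has full column rank, so that the normal equations have a unique solution.

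\textbf{Forming $(SA,Sb)$.} Padding by $J$ costs $O(Nd)$ to write the $N\times(d+1)$ matrix $[JA,Jb]$. The sign flips $D_\sigma$ cost $O(Nd)$. Each column then receives one fast Walsh--Hadamard transform in $O(N\log N)$ operations, for $O(Nd\log N)$ in total. This uses that $N$ is a power of two, which holds by Definition~\ref{def:balanced_gaussian_pooled_sketch} because $m$ and $s$ are powers of two. The permutation $\Pi$ is a reindexing costing $O(Nd)$, and so is the scaling by $D_\gamma$. Finally, $B$ sums consecutive blocks of size $s$, which costs $O(Nd)$ for all columns. The total is $O(Nd\log N)$.

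\textbf{Solving.} On the success event, $X:=SA$ has full column rank. We form $X^\top X$ in $\Tmat(d,m,d)$ and $X^\top Sb$ in $O(md)$. We then solve the $d\times d$ positive definite system by Cholesky or inversion in $O(d^\omega)$. Since $m\le N$, the $O(md)$ term is dominated by the others. This gives the stated $O(Nd\log N+\Tmat(d,m,d)+d^\omega)$. One could object that normal equations are numerically worse, but the lemma is stated in exact arithmetic.

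\textbf{Substituting the parameters.} This is the only step that needs real care. By Lemma~\ref{lem:simultaneous_block_embedding},
\[
N=O(n+m\kappa^{-2}(d+1)\Lambda^2).
\]
With $\kappa=c_0/\sqrt d$ and $m=O(\epsilon^{-2}d\log(d/\delta))$, the second term is $O(\epsilon^{-2}d\log(d/\delta)\cdot d\cdot d\,\Lambda^2)=\widetilde O(\epsilon^{-2}d^3)$. Here $\Lambda=\log(Cnmd/(\kappa\delta))$ is polylogarithmic in $n,d,1/\epsilon,1/\delta$, so $N=\widetilde O(n+\epsilon^{-2}d^3)$. Then:
\begin{itemize}
\item $Nd\log N=\widetilde O(nd+\epsilon^{-2}d^4)$.
\item $\Tmat(d,m,d)\le O(md^2)=\widetilde O(\epsilon^{-2}d^3)$, via $m/d$ products of $d\times d$ matrices.
\item $d^\omega\le d^3$.
\end{itemize}
All three are absorbed, giving the total $\widetilde O(nd+\epsilon^{-2}d^4)$.
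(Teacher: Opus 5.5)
Your proof is correct and follows essentially the same route as the paper: you cost $S$ in factored form (the $d+1$ fast Hadamard transforms dominate at $O(Nd\log N)$), form the normal equations in $\Tmat(d,m,d)$, solve in $O(d^\omega)$, and substitute $N=O(n+m\kappa^{-2}(d+1)\Lambda^2)$ with $\kappa=c_0/\sqrt d$. The differences are cosmetic: you form $X^\top Sb$ separately in $O(md)$ rather than bundling it into the $\Tmat$ product, and you itemize the $O(Nd)$ costs of padding, sign flips, permutation, Gaussian scaling and pooling.
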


\begin{proof}
With $\kappa=c_0/\sqrt d$ and $p=d+1$, the second term in the block-size requirement of
Definition~\ref{def:balanced_gaussian_pooled_sketch} is $\widetilde{O}(d^2)$. 
Combining this with Eq.~\eqref{eq:simultaneous_row_count} and the choice of
$s,N$ in Definition~\ref{def:balanced_gaussian_pooled_sketch} gives
$
  N=O(n+md^2\Lambda^2)
  =\widetilde{O}(n+\epsilon^{-2}d^3).
$ 
Applying $S$ to one vector costs $O(N\log N)$. Applying it to all $d$
columns of $A$ and to $b$ therefore costs $O(N(d+1)\log N)=O(Nd\log N)$.
The first step sums the cost over the $d+1$ inputs, and the last step uses
$d\geq1$. 
The resulting dense least-squares problem has size $m\times d$. In the
exact-arithmetic model, write $Y:=SA$ and $z:=Sb$. On the success event in
Theorem~\ref{thm:one_shot_coordinate_regression}, $Y$ has full column rank,
so $\widehat{x}$ is the unique solution of
$
  (Y^\top Y)\widehat{x}=Y^\top z.
$ 
This is the normal equation for the full-column-rank least-squares problem.
The two products can be formed together by multiplying $Y^\top$ by the
$m\times(d+1)$ matrix whose columns are those of $Y$ followed by $z$.
After constant-factor padding in the last dimension, this costs
$O(\Tmat(d,m,d))$. Solving the resulting nonsingular $d\times d$ system costs
$O(d^\omega)$~\cite{bh74}. This is an exact-arithmetic normal-equations
calculation; it is not a claim that a backward-stable QR factorization has the
same cost. Adding the sketching, product-formation, and linear-system costs
gives
$
  T=O(Nd\log N+\Tmat(d,m,d)+d^\omega).
$
This step adds the three costs established above.
Definition~\ref{def:balanced_gaussian_pooled_sketch} and the parameter choices
of Theorem~\ref{thm:one_shot_coordinate_regression} give
$N=\widetilde{O}(n+\epsilon^{-2}d^3)$ and
$m=\widetilde{O}(\epsilon^{-2}d)$. Consequently,
$\Tmat(d,m,d)=O(md^2)=\widetilde{O}(\epsilon^{-2}d^3)$ by classical matrix
multiplication, while $d^\omega=O(d^3)$. Substituting these bounds gives
$
  T=\widetilde{O}(nd+\epsilon^{-2}d^4).
$
This step substitutes the bounds for $N$, $\Tmat(d,m,d)$, and $d^\omega$ into
the preceding running-time expression.

\end{proof}

\bibliographystyle{alpha}
\bibliography{ref}

\end{document}